\newtheorem*{proposition}{Proposition}
\newcommand{\Initialization}{\State \textbf{Initialization: }}
\newcommand{\Data}{\State \textbf{Data: }}
\title{Inferring the dynamics of quasi-reaction systems via nonlinear local mean-field approximations}
\author{Matteo Framba\textsuperscript{1}, Veronica Vinciotti\textsuperscript{1}, Ernst C. Wit\textsuperscript{2}\\
	\textsuperscript{1}\textit{ University of Trento}, 
	\textsuperscript{2}\textit{ Universit\`a della Svizzera italiana}
}
\date{}
\begin{document}
	\maketitle

\begin{abstract}
In the modelling of stochastic phenomena, such as quasi-reaction systems, parameter estimation of kinetic rates can be challenging, particularly when the time gap between consecutive measurements is large.  Local linear approximation approaches account for the stochasticity in the system but fail to capture the nonlinear nature of the underlying process. 
At the mean level, the dynamics of the system can be described by a system of ODEs, which have an explicit solution only for simple unitary systems. An analytical solution for generic quasi-reaction systems is proposed via a first order Taylor approximation of the hazard rate. This allows a nonlinear forward prediction of the future dynamics given the current state of the system. Predictions and corresponding observations are embedded in a nonlinear least-squares approach for parameter estimation. The performance of the algorithm is compared to existing SDE and ODE-based methods via a simulation study.
Besides the increased computational efficiency of the approach, the results show an improvement in the kinetic rate estimation, particularly for data observed at large time intervals. Additionally, the availability of an explicit solution makes the method robust to stiffness, which is often present in biological systems. An illustration on Rhesus Macaque data shows the applicability of the approach to the study of cell differentiation.
\end{abstract}

\section{Introduction}
\label{sec:1}

Reaction networks are an efficient framework used to describe the population evolution in many biological and biochemical phenomena. These systems are typically modeled using stochastic differential equations, which effectively capture the inherent uncertainty and randomness of the underlying biological structure \citep{golightly2005bayesian}. Understanding the dynamics of a process requires a thorough knowledge of the evolution of its moments, typically obtained through the chemical master equation \citep{schnakenberg1976network}. The primary objective of many studies is to infer the parameters governing these moments, often achieved using Local Linear Approximation (LLA) due to its efficiency and ease of implementation. However, LLA methods have been found to be inaccurate when data are obtained within very small time intervals, due to collinearity, or across very large time intervals, due to nonlinearity. The former problem was recently addressed by means of a state space formulation involving modelling latent reactions \citep{framba2024latent}. With respect to the latter problem, existing methods exhibit significant estimation bias because of poor approximations \citep{shoji2013nonparametric}. This is a serious problem, as large observation intervals are typical in many practical experimental settings, such as gene therapy clonal studies, where blood sampling occurs monthly so as to align with the months-long lifespan of blood cells and their production cycle \citep{pellin2023tracking}.

	To date, only few studies have explored the challenges of parameter inference in quasi-reaction models for widely-spaced-data. \cite{pellin2023tracking} and \cite{milner2013moment} proposed moment-closure methods that numerically solve the differential equations of the first and second moment of the process, but this requires considerable computational effort especially for large population sizes.  The Bayesian inference approach in \cite{boys2008bayesian} works well in data-poor scenarios, but is computationally inefficient. Mean-field approximation techniques \citep{baccelli1992mean} offer a viable alternative by providing explicit solutions for the first moments of state distributions while maintaining the process's nonlinearity. However, this approach is limited to unitary systems, where each reaction involves the transformation of a single element into one or more products. Such scenarios are rare, as real-world models are better characterized by nonlinear dynamics, which capture complex behaviors like logistic growth \citep{tsoularis2002analysis}, bifurcations \citep{hale2012dynamics}, and limit cycles \citep{ye1986theory}. \cite{xu2019statistical} proposed a method-of-moments algorithm that matches second order moments. However, due to the statistical instability of the second moment, this approach fails in highly stochastic or nonlinear systems.

	 Various methods have employed Taylor approximations to solve kinetic rate equations. \cite{kennealy1977numerical} utilized the Taylor series expansion for numerical integration of chemical kinetics, leveraging the ease of obtaining higher-order derivatives from the specific form and symmetries of differential equations in chemical systems. However, the need for frequent adjustments to the step size to ensure convergence remains a challenge. \cite{cordoba1998optimal} introduced a method for optimizing the initial parameters of the Taylor integrator by controlling local errors. This approach allows for larger step sizes without increasing computational complexity significantly, yet it demands an accurate analytical expression for the local errors, which can be difficult to obtain in complex systems. \cite{lente2022use} developed an algorithm based on Taylor's theorem for solving kinetic differential equations, using polynomial expansions of concentration-time functions. However, its applicability is limited by the requirement for suitable time transformations to maintain the polynomial nature of the rate equations, which may not always be feasible. In this paper,  we propose an efficient method that extends the mean-field approach using a Taylor expansion not in time, as the above methods proposed, but in concentration.  Starting from the chemical master equation and using the system of nonlinear equations describing the dynamics of the process mean, we obtain a linear approximation of the rate function. This leads to an approximation of the system of ordinary differential equations (ODEs) with an explicit solution. By combining our method with a nonlinear least-squares method, it is possible to perform inference of the parameters governing the rate equations.  

The paper is structured as follows. In section \ref{sec:2}, we formalize the statistical modelling of quasi-reaction systems and introduce the generic mean-field approach. We define the proposed local mean-field approximation method and illustrate it in an example. We then study its resistance to stiffness, by comparing the performance of our method against several numerical approaches. The nonlinear least-squares procedure for parameter estimation is described in section~\ref{sec:inference}, both from a methodological and computational point of view. Section~\ref{sec:5} is reserved for simulation studies. By comparing the performance of the proposed algorithm with both the existing LLA method and other state-of-the-art approaches, we show its improved performance, particularly as the time interval between consecutive observations increases. In section~\ref{sec:rhesus}, we illustrate the method in the analysis of a cell differentiation study by \cite{wu2014clonal}. Via BIC model selection, we study the pattern of cell differentiation of 5 major blood cells, and estimate the parameters regulating the underlying reaction process.

\section{Local Mean-field Approximation (LMA)} 
\label{sec:2}

In this section, we describe the proposed method for estimation of parameters in a quasi-reaction system. To this end, in section~\ref{sec:QRM},  we first describe concisely a general quasi-reaction system, with the aim of deriving the general ODE formulation of the conditional mean of this process. In section~\ref{sec:unitary}, we show that this system of ODEs can be solved explicitly for unitary systems. For generic systems, the solution does not exist. However, by using the solution from a unitary system, we derive in section~\ref{sec:LMA} a generic approximation for any quasi-reaction system.  

\subsection{Quasi-reaction models} \label{sec:QRM}
Consider a closed system with $p$ interacting species $\{Q_1,\dots,Q_p \}$.  Every interaction between the substrates is caused by the occurrence of a quasi-reaction  $R_j$, described as 
\begin{equation}
	\label{initial}
	k_{1j}Q_1+... +k_{pj}Q_p \xrightarrow{\theta_j}  s_{1j}Q_1+... +s_{pj}Q_p \quad\quad j\in1,\dots,r.
\end{equation}
The occurrence of the $j$-th reaction leads to a change of ${v}_{lj}={s}_{lj}-{k}_{lj}$ substrates for particle type $l$. Let $V$ denote the net effect matrix having $v_{lj}$ as $lj$-th element, and $K = \{k_{lj}\}$ the reactant matrix.
Let $\bm{Y}(t)=(Y_{1}(t),\dots,Y_{p}(t))^T\in\mathbb{N}_0^p$ denote the continuous time counting process, with $Y_l(t)$ the number of $l$-th particles  present in the system at time $t\in[0,T]$. Under standard assumptions, the conditional rate for the $j$-th reaction in such system is given as
\begin{equation}
	\label{hazard}
	\lambda_j(\bm{Y}(t);\boldsymbol{\theta}) = \theta_j \prod_{l=1}^p  \binom{Y_l(t)}{k_{lj}},
\end{equation}
with $\binom{Y_{l}(t)} {k_{lj}}=0, $ for all $\hspace{0.05cm}{Y_{l}(t) <k_{lj}}$. The rate parameters $\boldsymbol{\theta} \in \mathbb{R}^{r}_+$ govern the process dynamics.  Let \( \Theta \) be a diagonal matrix with \(\bm{\theta} \) on the diagonal, and \( \bm{\kappa}(t) \)  the vector with \( j \)-th element \( \kappa_j(t)= \prod_{l=1}^{p} \binom{Y_l(t)}{k_{lj}} \). Thus, the hazard function can be compactly written as \( \bm{\lambda}(\bm{Y}(t); \bm{\theta}) = \Theta \bm{\kappa}(t) \).

The stochastic process  $\bm{Y}(t)$ obeys the Markov property. Given an initial condition $\bm{y}(t_0)$, it is possible to determine the probability density $p_t(\bm{Y})$ of the system being in the state $\bm{Y}$ at time $t$. The temporal evolution of the Markov process transition kernel is governed by the Kolmogorov's forward equations \citep{wilkinson18}. In the context of stochastic kinetic process, these  are commonly referred to as the chemical master equation, which is given by
\begin{equation}
	\label{CME}
	\frac{d p_t(\bm{y})}{dt}=\sum_{j=1}^{r}p_t(\bm{y}-\bm{v}_{\cdot j})\lambda_j(\bm{y}-\bm{v}_{\cdot j };\theta_j)-p_t(\bm{y})\lambda_j(\bm{y};\theta_j), \quad \forall\hspace{0.1cm} \bm{y}\in \mathbb{N}_0^p.
\end{equation}
Solving the above equation involves assessing the evolution of $P_t(\bm{y})$  over the entire range of possible configurations for the process. This approach clearly does not offer a feasible solution for systems of realistic size and complexity. Nevertheless, valuable insights regarding the dynamics of characteristic statistical features of the system can be derived from \eqref{CME}. 

In particular, one can obtain from the chemical master equation a set of ODEs describing the temporal evolution of lineage population concentration averages.  To this end, let $\bm{m}(t+s|t)$ describe the evolution of $\mathbb{E}[\bm{Y}(t+s)|\bm{Y}(t)=\bm{y}(t)]=\sum_{\bm{y}}\bm{y}p_{t+s|t}(\bm{y})$. By taking the derivative, the following ODEs system describing the dynamics of the conditional mean of $\bm{Y}(t+s)|\bm{Y}(t)$ can be obtained \citep{pellin2023tracking}:
 \begin{equation}
 	\label{mean1}
 	\dfrac{d m_{l}(t+s|t)}{ds}= \sum_{j=1}^r v_{lj}\mathbb{E}[\lambda_j(\bm{Y}(t+s);\boldsymbol{\theta})~|~\bm{Y}(t)], \quad l=1,\ldots,p.
  \end{equation} 
There have been several proposals for solving \eqref{mean1} using approximation methods. These, however, come with significant limitations. The system size expansion method  by \cite{van1992stochastic} tends to lose accuracy in systems with small populations or complex dynamics. The moment closure approximation \citep{grima2012study} often loses information due to the arbitrary truncation of higher moments, while the diffusion approximation method by \cite{golightly2005bayesian} can be unreliable in cases with low molecule counts and highly nonlinear behaviour. In the next section, we show how an explicit mean-field solution  can be found for unitary systems, and how this can then be used as the basis of a new approximation method.

\subsection{Explicit mean-field solution for unitary systems}
\label{sec:unitary}

Unitary systems are quasi-reaction systems where each reaction needs at most one particle from each reactant in order to occur. In such systems, the hazard rate \eqref{hazard} is linear in $\bm{y}$, so \[\mathbb{E}[\lambda_j(\bm{Y}(t+s;\boldsymbol{\theta}))~|~\bm{Y}(t)=\bm{y}(t)]= \lambda_j(\mathbb{E}[\bm{Y}(t+s)~|~\bm{Y}(t)=\bm{y}(t)];\boldsymbol{\theta})=\lambda_j(\bm{m}(t+s|t);\boldsymbol{\theta}).\] By adding the initial condition $\bm{m}(t|t)=\bm{y}(t)$, the system (\ref{mean1}) is expressed by the following first order Cauchy differential equation, 
\begin{equation}
	\left\{
	\begin{aligned}
		\label{cauchy}
		\frac{d\bm{m}(t+s|t)}{ds} &= P_{\boldsymbol{\theta}}\bm{m}(t+s|t) + \bm{b}_{\boldsymbol{\theta}} \\
		\bm{m}(t|t) &= \bm{y}(t)
	\end{aligned}
	\right.
\end{equation}
The coefficient matrix $P_{\boldsymbol{\theta}}$ and the inhomogeneous term $\bm{b}_{\boldsymbol{\theta}}$ are functions of the vector $\boldsymbol{\theta}$. The former relates to the reactions that involve exactly one reactant, whereas the latter refers to spontaneous reactions that do not involve any reactants. By employing the hazard function \eqref{hazard} and applying straightforward algebraic manipulations, they can be explicitly expressed in terms of the death and net effect matrices as  $P_{\boldsymbol{\theta}}=V\Theta K^T$ and
	$b_{\boldsymbol{\theta},l}\notag=\sum_j V_{l j}\theta_j\mathbbm{1}_{\{K_{.j}=0 \}}$.
If the $P_{\boldsymbol{\theta}}$ matrix is invertible, the system \eqref{cauchy} has the explicit solution 
\begin{equation}
	\label{ODEsol}
\bm{m}(t+s|t)=\exp\big(sP_{\boldsymbol{\theta}}\big)\bm{y}(t) + {P_{\boldsymbol{\theta}}}^{-1}\bigg(\exp\big(sP_{\boldsymbol{\theta}}\big)-\mathbb{I}_p\bigg)\bm{b}_{\boldsymbol{\theta}}.
\end{equation}
Although this approach only works for unitary systems, it inspires our proposal for generic quasi-reaction systems explained in the next section.

\subsection{LMA: local mean-field approximation for generic systems}\label{sec:LMA}
Most biological quasi-reaction systems involve complex, higher order interactions that make finding an analytical solution to \eqref{mean1} typically impossible. Our idea is to linearise the hazard function with respect to the abundance vector $\bm{Y}$ so that any quasi-reaction system can be approximated as a unitary system, thus obtaining an explicit solution of the ODEs system of the form \eqref{ODEsol}. Omitting the dependence of the hazard function on the parameters $\boldsymbol{\theta}$ for the sake of readability, we perform a first order Taylor expansion of \( \boldsymbol{\lambda}(\bm{Y}(t+s)) \) around \( \bm{Y}(t) \), 
\begin{equation}
\label{Taylorapprox}
	\boldsymbol{\lambda}(\bm{Y}(t+s))= \boldsymbol{\lambda}(\bm{Y}(t))+\Lambda (\bm{Y}(t+s)-\bm{Y}(t))+\boldsymbol{\eta}(t).
\end{equation}
Here, \( \eta_j \) is the approximation error for the \( j \)-th component, which, from Taylor's theorem, is of the form
\begin{equation}
	\label{err}
	\eta_j = \frac{1}{2} \frac{\partial^2 \lambda_j( \bm{\tilde{Y}})}{\partial Y_l(t) \partial Y_k(t)} (Y_l(t+s) - Y_l(t))(Y_k(t+s) - Y_k(t)), 
	\quad l,k = 1, \dots, p
\end{equation}
with \( \bm{\tilde{Y}} \in (\bm{Y}(t), \bm{Y}(t+s)) \). The Jacobian matrix $\Lambda=\dfrac{\partial \bm{\lambda}}{\partial \bm{y}}$ is explicitly defined by the  proposition below, whose proof can be found in Appendix \ref{jac:der}.
\begin{proposition}
	Given the intensity function $
	\lambda_j(\bm{Y}(t);\boldsymbol{\theta}) = \theta_j \prod_{l=1}^p \binom{Y_l(t)}{k_{lj}},
	$ the $jl$-th element of the Jacobian matrix $\Lambda(\mathbf{Y}(t);\boldsymbol{\theta})\in \mathbb{R}^{r\times p}$ is given by
	\begin{equation*}
		\Lambda_{jl}=\theta_j\underbrace{\prod_{i=1}^{p}\binom{Y_i(t)}{k_{ij}}(1-\delta_{il})\binom{Y_l(t)}{k_{ij}}\bigg(\psi(Y_l(t)+1)-\psi(Y_l(t)-k_{lj}+1)\bigg)}_{H_{jl}}
	\end{equation*}
	where
	$\psi(x)=\dfrac{d}{dx}\log(\Gamma(x)) $ is the digamma function, i.e., the logarithmic derivative of the gamma function. 
\end{proposition}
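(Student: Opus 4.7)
The plan is to compute the partial derivative $\partial \lambda_j/\partial Y_l$ by logarithmic differentiation, exploiting the product form of the hazard function together with the $\Gamma$-function representation of the binomial coefficient. Because $\lambda_j$ is a product over $l$ and only one factor in that product depends on $Y_l$, logarithmic differentiation reduces the computation to a single one-variable derivative, which is exactly the setting in which the digamma function naturally appears.

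First I would extend each combinatorial factor to a smooth function of a continuous argument by writing $\binom{Y_l(t)}{k_{lj}} = \Gamma(Y_l(t)+1)/[\Gamma(k_{lj}+1)\,\Gamma(Y_l(t)-k_{lj}+1)]$, so that $\lambda_j(\bm{Y}(t);\bm{\theta})$ is differentiable in each $Y_l$ (with $k_{lj}$ and $\theta_j$ treated as constants). Taking logarithms then turns the product into a sum:
$$\log \lambda_j(\bm{Y}(t);\bm{\theta}) = \log \theta_j + \sum_{i=1}^{p}\Big[\log\Gamma(Y_i(t)+1) - \log\Gamma(k_{ij}+1) - \log\Gamma(Y_i(t)-k_{ij}+1)\Big].$$

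Next, I would differentiate this expression with respect to $Y_l$. Only the $i=l$ summand contributes, and by the definition $\psi(x) = \tfrac{d}{dx}\log\Gamma(x)$ of the digamma function one obtains $\partial \log\lambda_j / \partial Y_l = \psi(Y_l(t)+1) - \psi(Y_l(t)-k_{lj}+1)$. Multiplying both sides by $\lambda_j$ and substituting back the explicit product form of the intensity yields
$$\Lambda_{jl} = \frac{\partial \lambda_j}{\partial Y_l} = \theta_j \Big(\prod_{i=1}^{p}\binom{Y_i(t)}{k_{ij}}\Big)\big[\psi(Y_l(t)+1) - \psi(Y_l(t)-k_{lj}+1)\big],$$
which matches the claimed expression once the $\prod_i$ notation with the Kronecker factor $(1-\delta_{il})$ and the separated $l$-th binomial in the statement are unraveled (the $i=l$ binomial having been absorbed into the digamma-difference factor).

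There is no real obstacle here: each step is a one-line application of the chain rule combined with the definition of $\psi$. The only subtle point worth flagging is the justification of treating $\binom{\,\cdot\,}{k_{lj}}$ as a smooth function of a continuous argument, but this is simply the standard $\Gamma$-function extension and is in any case already implicit in the first-order Taylor expansion \eqref{Taylorapprox} that motivates introducing the Jacobian in the first place.
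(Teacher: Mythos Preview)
Your proposal is correct and follows essentially the same approach as the paper: both proofs use the $\Gamma$-function extension of the binomial coefficient and logarithmic differentiation so that the digamma function appears naturally, the only cosmetic difference being that the paper first isolates the $l$-th factor via the product rule and then log-differentiates it, whereas you take the logarithm of the whole product up front.
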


Using now approximation \eqref{Taylorapprox} in (\ref{mean1}), we obtain the conditional mean ODEs system:
\begin{align}
	\frac{d}{ds}	\bm{m}(t+s|t)&=V\mathbb{E}\bigg[\boldsymbol{\lambda}(\bm{Y}(t))+\Lambda\big(\bm{Y}(t+s)-\bm{Y}(t)\big)~|~ \bm{Y}(t)=\bm{y}(t)\bigg]\notag\\
	&=V\Lambda\mathbb{E}[\boldsymbol{\lambda}(\bm{Y}(t+s))|\bm{Y}(t)=\bm{y}(t)]+V\boldsymbol{\lambda}(\bm{y}(t))-V\Lambda\bm{y}(t)
	\notag \\
	&=\underbrace{V \,\Theta\, H}_{P_{\boldsymbol{\theta}}}\bm{m}(t+s|t)+\underbrace{V \, \Theta\, (\boldsymbol{\kappa}(t) -  H \, \bm{y}(t))}_{\bm{b}_{\boldsymbol{\theta}}}
	\notag \\
	&=P_{\boldsymbol{\theta}}\bm{m}(t+s|t)+\bm{b}_{\boldsymbol{\theta}}. \label{ODEmean} 
\end{align}  
If $|P_{\boldsymbol{\theta}}|\neq 0$, the explicit solution of the ODEs system above is defined as in (\ref{ODEsol}).

In terms of convergence to the solution of the original ODEs system \eqref{mean1}, the approximation does not depend on the time step but on the difference between concentrations as indicated by the relation \eqref{err}. Since we stop the expansion at first order, and if we consider at most second-order reactions, then the approximation error is determined by the norm of the Hessian matrix of the hazard function.

\subsection{Example: cyclic chemical reaction network}\label{example}
In this section, we consider the example of a cyclic chemical reaction network involving three particle types $(A,B,C)$ with abundances $\bm{Y}=(Y_1,Y_2,Y_3)$. The reactions are given by
	\begin{align*}
		2 \text{A} & \xrightarrow{\theta_1} 2\text{B} \\
		\text{A} + \text{B} & \xrightarrow{\theta_2} 3\text{C}  \\
		2 \text{C} & \xrightarrow{\theta_3} 2\text{A}.
	\end{align*}
In particular, this network is a closed loop where the products of one reaction act as the reactants for another, creating a continuous cycle of chemical transformations. Such cyclic networks are central in understanding metabolic cycles and oscillatory behavior in biological systems  \citep{gillespie2007stochastic}. The first reaction describes the conversion of two molecules of species A into two molecules of species B and could be seen as a simple process where A is transformed into B in the presence of a catalyst. The second reaction, where A and B combine to form three molecules of species C, can be viewed as a synthesis reaction often seen in polymerization processes. The final reaction regenerates species A from C, completing the cycle and ensuring the continuity of the process. 

The reactant and net effect matrix associated to this system are given, respectively, by
\[
K=\begin{bmatrix}2&1&0\\0&1&0\\0&0&2\end{bmatrix}, \quad V=\begin{bmatrix}-2&-1&2\\2&-1&0\\0&3&-2\end{bmatrix}.
\]
The first is used in the definition of the hazard rates in \eqref{hazard}, while the second, together with the hazard function, define the  dynamics of the first moments of the concentrations in \eqref{mean1}.

The quasi-reaction system is clearly non-unitary, as more than one particle type is used in each reaction. This leads to a hazard that is not linear in $\bm{Y}$ and to no analytical solution to the ODEs system in \eqref{mean1}. The local mean-field approximation method described in section~\ref{sec:LMA} provides an explicit approximate solution to the system. In order to show this, let \( \bm{y}=(y_1,y_2,y_3) \) represent the observation of the continuous process at time \( t \).
The reaction events are associated with the hazard function 
\[ \bm{\lambda}(\bm{y}) := \lambda(\bm{Y}(t);\boldsymbol{\theta})\big|_{\bm{Y}(t)=\bm{y}} =\begin{bmatrix}
	\theta_1 y_1(y_1 - 1)/2 \\
	\theta_2 y_1 y_2 \\
	\theta_3 y_3(y_3 - 1)/2
\end{bmatrix},  \]
for which we consider a Taylor expansion in $\bm{y}$ and then its first order approximation. In particular, the Jacobian matrix, evaluated at $\bm{y}$, is given by 
\[
\Lambda = \frac{\partial \bm{\lambda}(\bm{Y(t)};\boldsymbol{\theta})}{\partial \bm{Y}}\big|_{\bm{Y}(t)=\bm{y}} =\begin{bmatrix}\theta_1(y_1-0.5)&0&0\\\theta_2y_2&\theta_2y_1&0\\0&0&\theta_3(y_3-0.5)\end{bmatrix}.
\]

We now plug in these quantities in \eqref{Taylorapprox}, leading to the ODEs system approximation \eqref{ODEmean} with
\begin{align*}
	P_{\boldsymbol{\theta}}&=V\Lambda =\begin{bmatrix}-2&-1&2\\2&-1&0\\0&3&-2\end{bmatrix}\begin{bmatrix}\theta_1(y_1-0.5)&0&0\\\theta_2y_2&\theta_2y_1&0\\0&0&\theta_3(y_3-0.5)\end{bmatrix}\\
	&=\begin{bmatrix}-2\theta_1(y_1-0.5)-\theta_2y_2&-\theta_2y_1&2\theta_3(y_3-0.5)\\2\theta_1(y_1-0.5)-\theta_2y_2&-\theta_2y_1&0\\3\theta_2y_2&3\theta_2y_1&-2\theta_3(y_3-0.5)\end{bmatrix},\\
	\bm{b}_{\boldsymbol{\theta}}&=V(\lambda(\bm{y})-\Lambda\bm{y})\\
	&=\begin{bmatrix}-2&-1&2\\2&-1&0\\0&3&-2\end{bmatrix}\bigg(\begin{bmatrix}
		\theta_1 y_1(y_1 - 1)/2 \\
		\theta_2 y_1 y_2 \\
		\theta_3 y_3(y_3 - 1)/2
	\end{bmatrix} -\begin{bmatrix}\theta_1(y_1-0.5)&0&0\\\theta_2y_2&\theta_2y_1&0\\0&0&\theta_3(y_3-0.5)\end{bmatrix}\begin{bmatrix}
		y_1\\y_2\\y_3
	\end{bmatrix}\bigg)\\&=\begin{bmatrix}
		\theta_1(y_1)^2+\theta_2y_1y_2-\theta_3(y_3)^2\\-\theta_1(y_1)^2+\theta_2y_1y_2\\-3\theta_2y_1y_2+\theta_3(y_3)^2
	\end{bmatrix}.
\end{align*}

By substituting the quantities above into equation \eqref{ODEsol}, we obtain the explicit form of the mean process values after a time interval $s$. These constitute a nonlinear forward prediction of the system at time $t+s$. It should be noted that \( \det(P_{\boldsymbol{\theta}}) = 12\theta_1\theta_2\theta_3 (y_1 - 0.5) y_2 (y_3 - 0.5) \) is non-zero if and only if \( \boldsymbol{\theta} \neq 0 \) and \( \bm{y} \neq (0.5,0,0.5) \).

\subsection{Computational time and stiffness}\label{stiffness}
The main advantage of the proposed method is that the ODEs system in \eqref{ODEmean}, used to approximate the orginal system in \eqref{mean1}, has an explicit solution. As we will show in this section, classical numerical methods for solving ODEs systems may be computationally more efficient than calculating this analytical solution, but they are less stable in many scenarios.  

A traditional numerical algorithm for solving ODEs systems, well known for its simplicity of implementation, is the explicit Euler method. This has a slow linear convergence with respect to the width of the time steps $\Delta t$ but a linear computational cost with respect to the total number of subintervals ${T_c}=T/\Delta t$, with $t \in [0,T]$. A good alternative is the explicit fourth-order Runge-Kutta method. This offers a fourth-order convergence, by evaluating the function at multiple points within each time step, but consequently leading to a four times higher computational cost \citep{krijnen2022computational}. The balance between accuracy and computational cost makes Runge-Kutta methods preferable for many practical applications where accuracy is a priority \citep{butcher2016numerical}. As for the proposed LMA, in order to solve the ordinary differential equations \eqref{ODEmean} analytically, it is necessary to compute the exponential of a \( p \times p \) matrix.  For this, we utilize the Pade approximation with scaling and squaring method due to its efficiency with dense matrices. This approach has a computational complexity of \( \mathcal{O}(p^3) \).  Additionally, we need to compute the inversion of the same matrix, which incurs a similar computational cost. Consequently, the overall cost for calculating the analytical solution in \eqref{ODEmean} using the LMA approach amounts to \( \mathcal{O}(p^3) \), which is higher than both the Euler and Runge-Kutta methods. 

Besides computational time, a further comparison between the methods can be made in terms of performance in the presence of stiffness. It is well known how the explicit Euler and Runge-Kutta methods struggle with problems that are classified as stiff. On the other hand, the availability of an explicit solution makes the method robust also in the presence of stiffness.  A system is considered stiff in a given interval if, when applying a numerical method with a finite region of absolute stability, the step length required is excessively small relative to the smoothness of the exact solution \citep{lambert1974two}. This definition emphasizes the difficulty of maintaining stability with explicit methods, which may necessitate impractically small step sizes to accurately obtain the solution.

In quasi-reaction chemical models, the phenomenon of stiffness arises in situations where very slow and very fast reactions coexist. A well known example of a stiff problem is the kinetics of an autocatalytic system \citep{robertson1966solution}. 
However, as its associated net effect matrix 
is not full rank,  \( P_{\boldsymbol{\theta}} = V \Lambda \) is not invertible. Instead, we consider the cyclic chemical reaction system described in section~\ref{example}, with initial value $\bm{y}=(10,20,10)$ and reaction rate $\boldsymbol{\theta}=(2\cdot10^{-6}, 10^{-7}, 2\cdot10^{-1})$. After a first order Taylor approximation, we consider the ODEs system \eqref{ODEmean} associated to this network. The matrix associated to this system is invertible. Moreover, the eigenvalues of $P_{\boldsymbol{\theta}}$, given by
$7.6\cdot 10^{-7}$, $-4.2\cdot 10^{-5}$, and  $-3.8\cdot 10^0$, are different from each other in magnitude, which is taken as an indication of stiffness \citep{butcher2016numerical}. Intuitively, it is clear how the third reaction is much faster than the first two. 
 
Figure~\ref{fig:2} shows the robustness of the Euler and Runge-Kutta methods when used to solve numerically the ODEs system \eqref{ODEmean} of this problem.
\begin{figure}
	\centering
	\includegraphics[scale=.3]{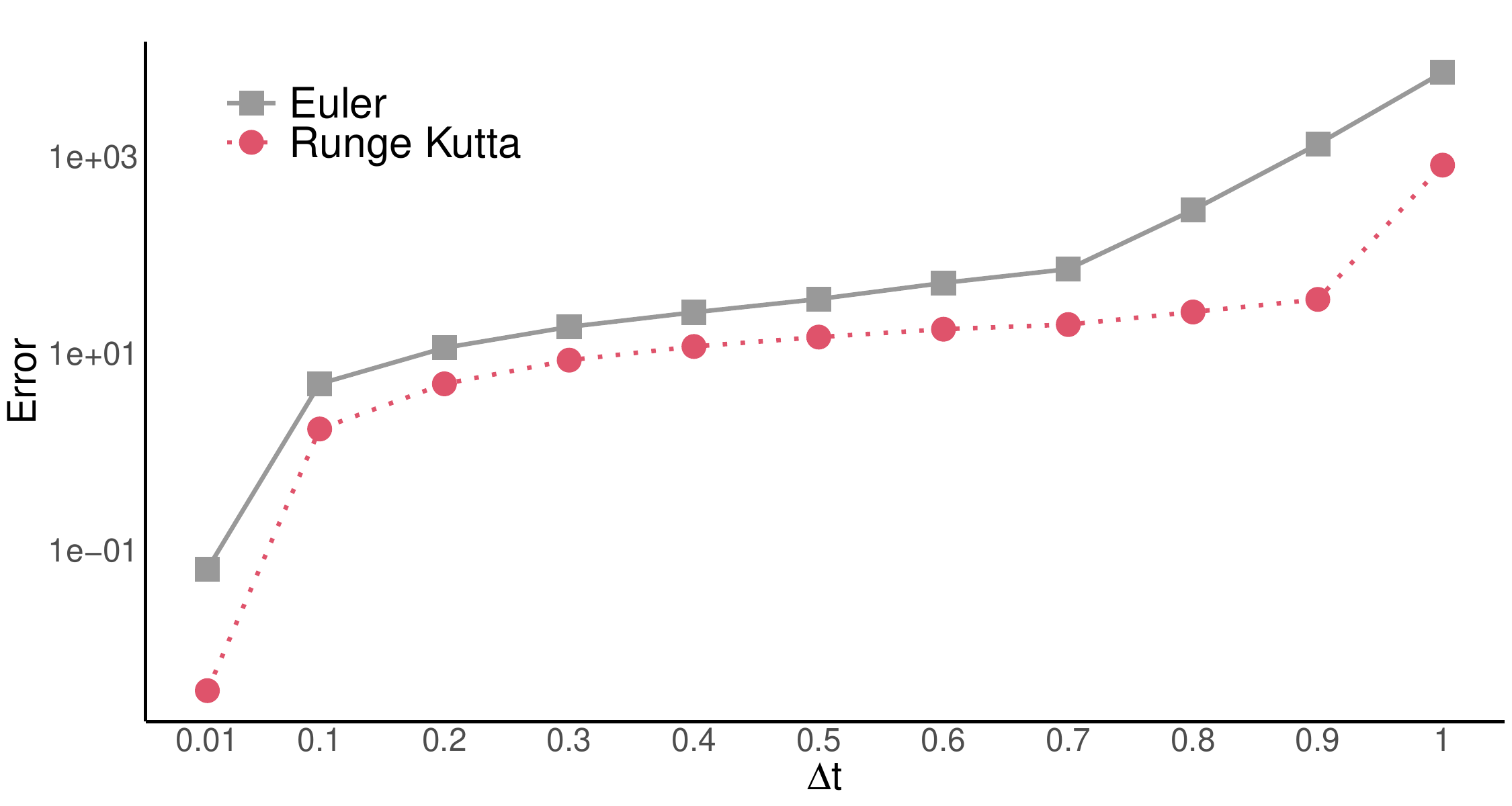}
	\caption{Mean absolute error between the numerical solution from each method and the analytical LMA solution in \eqref{ODEsol}, as $\Delta t$ increases: both Euler and Runge-Kutta methods experience a significant degradation in accuracy as $\Delta t$ increases.}
	\label{fig:2}
\end{figure}
The error is defined as the average of the absolute difference of the solution obtained by the proposed method and the analytical LMA solution in \eqref{ODEsol}, when both solutions are evaluated at $5$ time points.  For very small time intervals, the Euler and Runge Kutta methods are accurate. However, they become unstable as $\Delta t$ increases.  
Table~\ref{tab:summary} summarises the comparison of the various methods discussed in this section, in terms of computational costs, convergence and robustness to stiffness.
\begin{table}[tb]
	\centering
	\begin{tabular}{@{}lcccc@{}}
		\toprule
		\textbf{Method}                  & \textbf{Computational Cost}              & \textbf{Convergence}           & \textbf{Stiffness} \\ \midrule
		Explicit Euler                 & $\mathcal{O}(pT$)                   & $\mathcal{O}(\Delta t )$                          & Unsuitable      \\
		Runge-Kutta  & $\mathcal{O}( 4pT$)          & $\mathcal{O}(\Delta t^4)$                        & Unsuitable                            \\ 
		LMA & $\mathcal{O}( p^3 T )$           & $\mathcal{O}(1)$                        &Robust                             \\ \bottomrule
	\end{tabular}
	\caption{Comparison of computational costs, error convergence, and performance in stiff problems for the explicit Euler, fourth-order Runge-Kutta, and the proposed LMA method.}
	\label{tab:summary}
\end{table}
The proposed LMA approach provides a good compromise between computational efficiency and robustness to stiffness.

\section{Inference}\label{sec:inference}
Dynamic processes are often observed at discrete time points, and possibly across several replicates. For example, in gene therapy studies \citep{del2023scalable}, and in hematopoietic clonal dynamics \citep{pellin2023tracking}, clones, i.e., genetically identical cells, are probed at times that are days or even months apart.

\subsection{Estimation of reaction rates}
We consider a set of $n$ replicates, whereby each replicate $c$ is observed across $T_c$ time points. Let $\bm{Y}=\{ \bm{Y}_{ci}=\bm{Y}_{c}(t_{ci})\}_{c,i}^{n,T_c}$ be the set of $p$-dimensional observations of realisations of particle counts subject to the quasi-reaction system.  The time intervals are not necessary equal, which means that the observation times $t_{ci}$ are also indexed by the replicate information. 
We define $\bm{m}(\bm{\theta})=[\bm{m}_1(\bm{\theta}),\dots,\bm{m}_{n}(\bm{\theta})]$ such that  $\bm{m}_{ci}(\bm{\theta})=\mathbb{E}[\bm{Y}_{ci}~|~\bm{Y}_{c}(t_{c,i-1})=\bm{y}_c(t_{c,i-1})]$ is the solution of the ODEs system \eqref{ODEmean} with initial condition $\bm{Y}_{c,i-1}=\bm{y}_{c,i-1}$. In other words, the solution of the system of ODEs projects each observation $\bm{y}_{c,i-1}$ to the expected value at the next time point. The inference procedure for the rates $\boldsymbol{\theta}$ can then be reformulated as the following nonlinear regression problem,  
\begin{equation}
	\label{generic}
	\bm{Y}_{ci}= \bm{m}_{ci}+ \boldsymbol{\varepsilon}_{ci},
\end{equation}
where $ \boldsymbol{\varepsilon}_{ci}$ is a \textit{p}-dimensional vector of residuals such that $ \mathbb{E}[\boldsymbol{\varepsilon}_{ci}]=\textbf{0}$.  
As likelihood-based approaches are unfeasible due to the computational effort involved in integrating over all possible states between the observation times, we propose instead a least-squares algorithm to estimate the $\boldsymbol{\theta}$ parameters. The constrained solution is then given by
\begin{equation}
	\label{prob}
	\hat{\boldsymbol{\theta}}_{LMA}= \arg \min_{\boldsymbol{\theta}\geq 0}\bigg\{ f(\boldsymbol{\theta})= \sum_{c=1}^{n}\sum_{i=1}^{T_c}\big[\bm{Y}_{ci}-\bm{m}_{ci}(\bm{\theta})\big]^T\big[\bm{Y}_{ci}-\bm{m}_{ci}(\bm{\theta})\big]\bigg\}.
\end{equation}

To determine the optimum, we use an iterative approach which consists of two steps. First, we solve analytically the approximation  of the ODE solution \eqref{ODEmean}. Then, we apply the limited-memory Broyden-Fletcher-Goldfarb-Shanno algorithm with box-constraints to find the optimum. This is  an optimization scheme used to solve large-scale optimization problems with simple bounds on the variables  \citep{byrd1995limited}.  A pseudo-code of the algorithm can be found in the Appendix \ref{SM:algorithm}.
The iterative procedure requires initial estimates $\hat{\boldsymbol{\theta}}_0$. Considering the potentially large number of parameters in the model, it is important to start the minimization of \eqref{prob} with accurate initial values. A practical starting value is provided by the local linear approximation approach, which will also be used in the comparative study. A detailed description of the method can be found in the Appendix~\ref{app:LLA}.

\subsection{Standard error approximation}\label{sec:sd}
In the context of statistical modeling of quasi-reaction systems, it is important to be able to evaluate the uncertainty associated with the estimated rates $\hat{\bm{\theta}}$. Only few studies provide explicit approximate formulations for this \citep{framba2024latent,tsuge2001rate}. In this section, we do so for the proposed method. Under certain regularity conditions the variance-covariance matrix of $\hat{\boldsymbol{{\theta}}}$ is approximately the inverse of the observed Fisher information matrix evaluated at $\hat{\boldsymbol{\theta}}$. The latter is the negative of the Hessian matrix $\mathbb{H}(\hat{\boldsymbol{\theta}}) := \frac{\partial^2 f(\boldsymbol{\theta})}{\partial \boldsymbol{\theta}^2}\big|_{\boldsymbol{\theta}=\hat{\boldsymbol{\theta}}}$ and can be approximated by
\begin{equation}
{\displaystyle {\mathcal {I}}(\hat{\boldsymbol{\theta}} )= \left(\frac {\partial }{\partial \theta} f(\hat{\bm{\theta}} )\right)\left({\frac {\partial }{\partial \theta}} f(\hat{\bm{\theta}} )\right)^T}.
\label{eq:sdapprox}
\end{equation}

The derivative of the objective function $f(\boldsymbol{\theta})$ with respect to the rates involves taking the derivative of the solution of the ODE system \eqref{ODEmean}, i.e., the derivatives of
\begin{equation}
	\label{mt}
	\bm{m}(t+s|t)=\exp\big(sP_{\boldsymbol{\theta}}\big)\bm{y}(t) + {P_{\boldsymbol{\theta}}}^{-1}\bigg(\exp\big(sP_{\boldsymbol{\theta}}\big)-\mathbb{I}_p\bigg)\bm{b}_{\boldsymbol{\theta}}
\end{equation}
with respect to $\boldsymbol{\theta}$.
For ease of notation, we consider a single-replicate scenario. We start by taking the derivatives of $P_{\boldsymbol{\theta}} $ and $\bm{b}_{\boldsymbol{\theta}}$  defined  in \eqref{ODEmean}. This gives
\begin{equation*}
	\label{dermiddle}
	\frac{\partial P_{\boldsymbol{\theta}}}{\partial \theta_j} = V \bm{e}_j H, \hspace{2cm}	\frac{\partial \bm{b}_{\boldsymbol{\theta}}}{\partial \theta_j} = V \bm{e}_j \boldsymbol{\kappa}(t) - V \bm{e}_j H \bm{y}(t),
\end{equation*}
where  $\bm{e}_j$ is a $r$-dimensional vector with a $1$ in the $j$-th position and $0$ elsewhere.
Using the matrix exponential derivative property and the chain rule, 
\begin{equation*}
	\frac{\partial \exp(s P_{\boldsymbol{\theta}})}{\partial \theta_j} = \int_0^1 \exp((1-u) s P_{\boldsymbol{\theta}}) \frac{\partial (sP_{\boldsymbol{\theta}})}{\partial \theta_j} \exp(u s P_{\boldsymbol{\theta}}) \, du.
\end{equation*}
Using the chain rule, we also obtain the derivative of the second term in (\ref{mt}),
\begin{align*}
	\frac{\partial}{\partial \theta_j} \left( P_{\boldsymbol{\theta}}^{-1} \left( \exp(s P_{\boldsymbol{\theta}}) - \mathbb{I}_p \right) \bm{b}_{\boldsymbol{\theta}} \right) &= \frac{\partial P_{\boldsymbol{\theta}}^{-1}}{\partial \theta_j} \left( \exp(s P_{\boldsymbol{\theta}}) - \mathbb{I}_p \right) \bm{b}_{\boldsymbol{\theta}} \nonumber + P_{\boldsymbol{\theta}}^{-1} \frac{\partial}{\partial \theta_j} \left( \exp(s P_{\boldsymbol{\theta}}) - \mathbb{I}_p \right) \bm{b}_{\boldsymbol{\theta}} \nonumber \\
	&\quad + P_{\boldsymbol{\theta}}^{-1} \left( \exp(s P_{\boldsymbol{\theta}}) - \mathbb{I}_p \right) \frac{\partial \bm{b}_{\boldsymbol{\theta}}}{\partial \theta_j}.
\end{align*}
Combining everything together and using 
$
\dfrac{\partial P_{\boldsymbol{\theta}}^{-1}}{\partial \theta_j} = -P_{\boldsymbol{\theta}}^{-1} \dfrac{\partial P_{\boldsymbol{\theta}}}{\partial \theta_j} P_{\boldsymbol{\theta}}^{-1}
$, the partial derivative of the conditional predicted values is
\begin{align}
	\frac{\partial \bm{m}(t+s \mid t)}{\partial \theta_j} &= \left(s \int_0^1 \exp((1-u) s P_{\boldsymbol{\theta}}) V \bm{e}_j H \exp(u s P_{\boldsymbol{\theta}}) \, du \right) \bm{y}(t) - P_{\boldsymbol{\theta}}^{-1} V {e}_j H P_{\boldsymbol{\theta}}^{-1} \left( \exp(s P_{\boldsymbol{\theta}}) - \mathbb{I}_p \right) {b}_{\boldsymbol{\theta}} \nonumber \\
	&\quad + P_{\boldsymbol{\theta}}^{-1}s \left(s \int_0^1 \exp((1-u) s P_{\boldsymbol{\theta}})V \bm{e}_j H \exp(u s P_{\boldsymbol{\theta}}) \, du \right) \bm{b}_{\boldsymbol{\theta}} \nonumber \\
	&\quad + P_{\boldsymbol{\theta}}^{-1} \left( \exp(s P_{\boldsymbol{\theta}}) - \mathbb{I}_p \right) \left( V {e}_j \boldsymbol{\kappa}(t) - V {e}_j H \bm{y}(t) \right).\label{derm}
\end{align}
Taking into account that the minimization problem relies on ${T_c}$ observations for each of  $n$ clone-type scenario,  we define the  $p\times r$-dimensional matrix $\bm{\xi}_{ci}$ such that the $j$-th column is the evaluation of \eqref{derm} at $(t_{ci},\bm{Y}_{ci})$. The gradient of the objective function with respect to \(\boldsymbol{\theta}_j\) is the $r$-dimensional vector 
\begin{equation*}
	\frac{\partial f_{ci}(\boldsymbol{\theta})}{\partial \theta_j} = -  \big[\bm{Y}_{ci}-\bm{m}_{ci}\big]^T\bm{\xi}_{cij}.
\end{equation*}

Using this expression, the variances of the estimated rates $\hat{\boldsymbol{\theta}}$, which we approximate with the diagonal of $\displaystyle {\mathcal {I}}(\hat{\boldsymbol{\theta}} )$ in \eqref{eq:sdapprox}, are given by
\begin{equation*}
\mathbb{V}[\hat{\boldsymbol{\theta}}]=\text{diag}\bigg(\sum_{c=1}^{n}\sum_{i=1}^{T_c}    (\bm{\xi}_{ci})^T \big[\bm{Y}_{ci}-\bm{m}_{ci}\big]\big[\bm{Y}_{ci}-\bm{m}_{ci}\big]^T \bm{\xi}_{ci}\bigg)^{-1}\bigg|_{\boldsymbol{\theta}=\hat{\boldsymbol{\theta}}}.
\end{equation*}
The derivations are checked numerically in the Appendix \ref{SM:sd}.

\section{Simulation study} \label{sec:5}
In this section, we present several simulation studies to assess the performance of the proposed method. 
The experimental setup, based on the system described in section~\ref{example}, reflects conditions typical of numerous biological applications.
 In section~\ref{sec:compLLA}, we evaluate the proposed method by varying the width between the observations $\Delta t$ and the number of time points $T$. We compare the results with an alternative local linear approximation method. For short time steps, we expect the local linearization to be a serious competitor, whereas for large time steps the nonlinearity of the system will make our inferential scheme preferable. In section~\ref{sec:time}, we study how accuracy and computational time vary by increasing the number of nodes and reactions.
Finally, in section~\ref{sec:xu} we compare our approach with a method-of-moments formulation introduced by \cite{xu2019statistical}, which relies on matching model-derived and empirical correlations in cell type dynamics.

\subsection{Performance as function of $\Delta t$ and $T$}\label{sec:compLLA}

In this section, we compare the performance of the LMA inferential procedure across different time steps $\Delta t$ and number of time points $T$. We use the Gillespie algorithm to simulate trajectories from the cyclic reaction system defined in section~\ref{example} with rates ${\boldsymbol{\theta}}=(0.2,   0.1,   0.2)$ and initial concentration $\textbf{y}_0=(100,100,100)$. Figure \ref{fig:trajectory} shows an example of trajectories for one of the simulations.
\begin{figure}[tb!]
	\centering
	\includegraphics[scale=0.34]{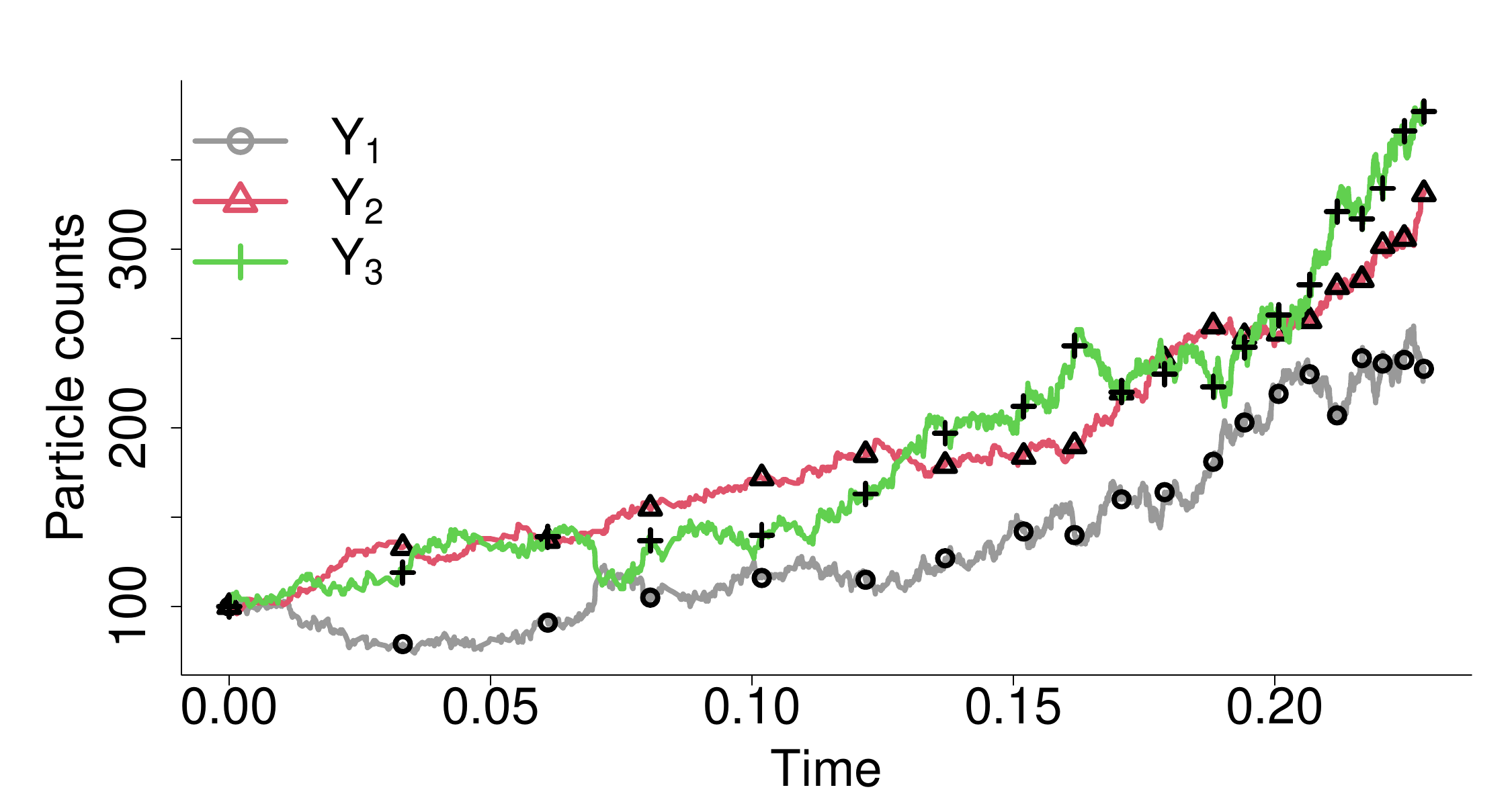}
	\caption{A trajectory generated using the Gillespie algorithm for the model described in section~\ref{example}. The symbols indicate the $T=20$ observations between every 100 simulated states (average $\Delta t \approx 0.012$).}
	\label{fig:trajectory}
\end{figure}

In order to evaluate the dependence of the results on $\Delta t$, we extract measurements from the simulated trajectories. In order to get, on average, increasing $\Delta t$ values,  we consider 5 different cases where we retain every 10, 30 50, 70, 100 values, respectively, from each trajectory for a total of $T=20$ measurements. See Figure \ref{fig:trajectory} for an example of the last case.  This leads to 5 different values for the average time steps ($0.003, 0.007, 0.009,.010,0.012$). In a second simulation, we fix the case $\Delta t = 0.012$ and consider 5 different values for the overall number of time points $T$. The simulations are repeated 100 times. 
\begin{figure}[tb!]
	\includegraphics[scale=0.29]{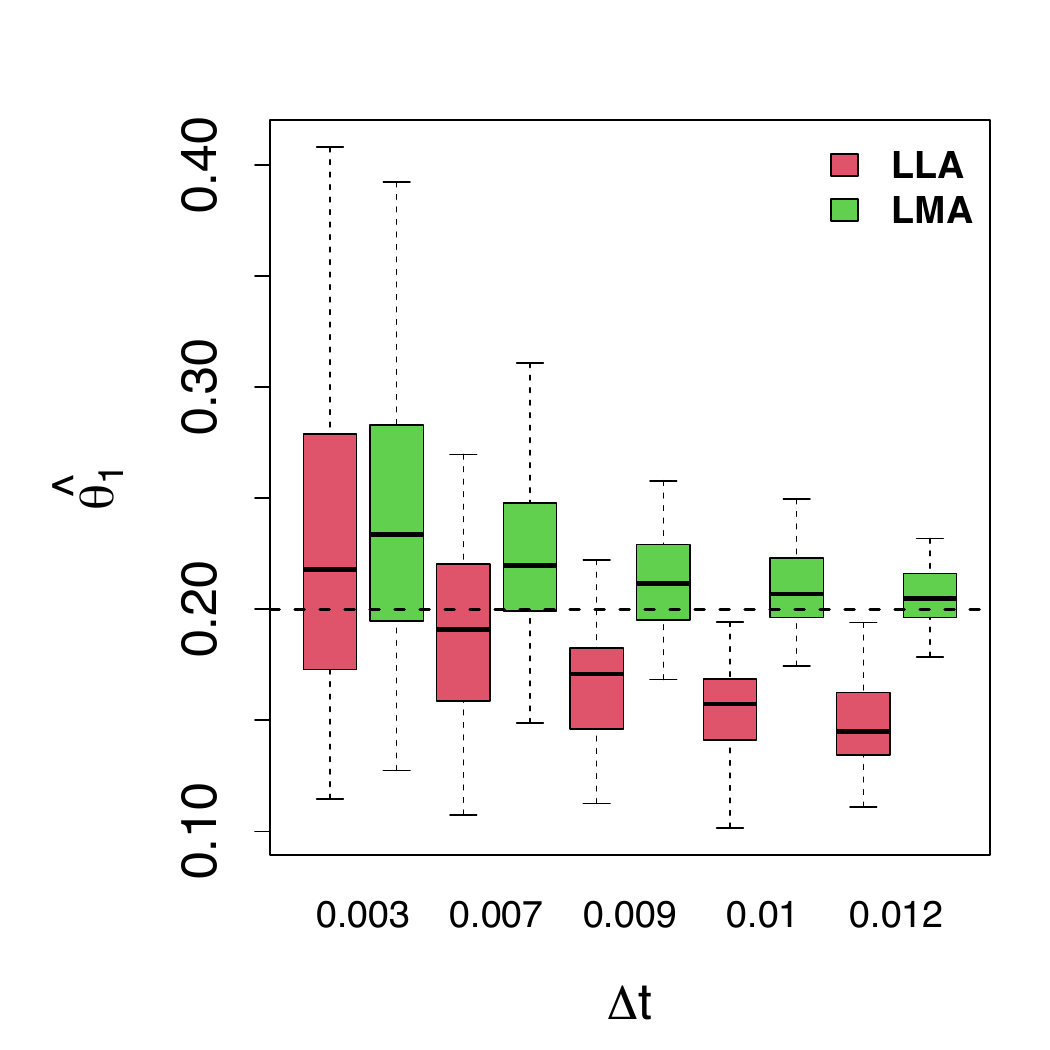}	
	\includegraphics[scale=0.29]{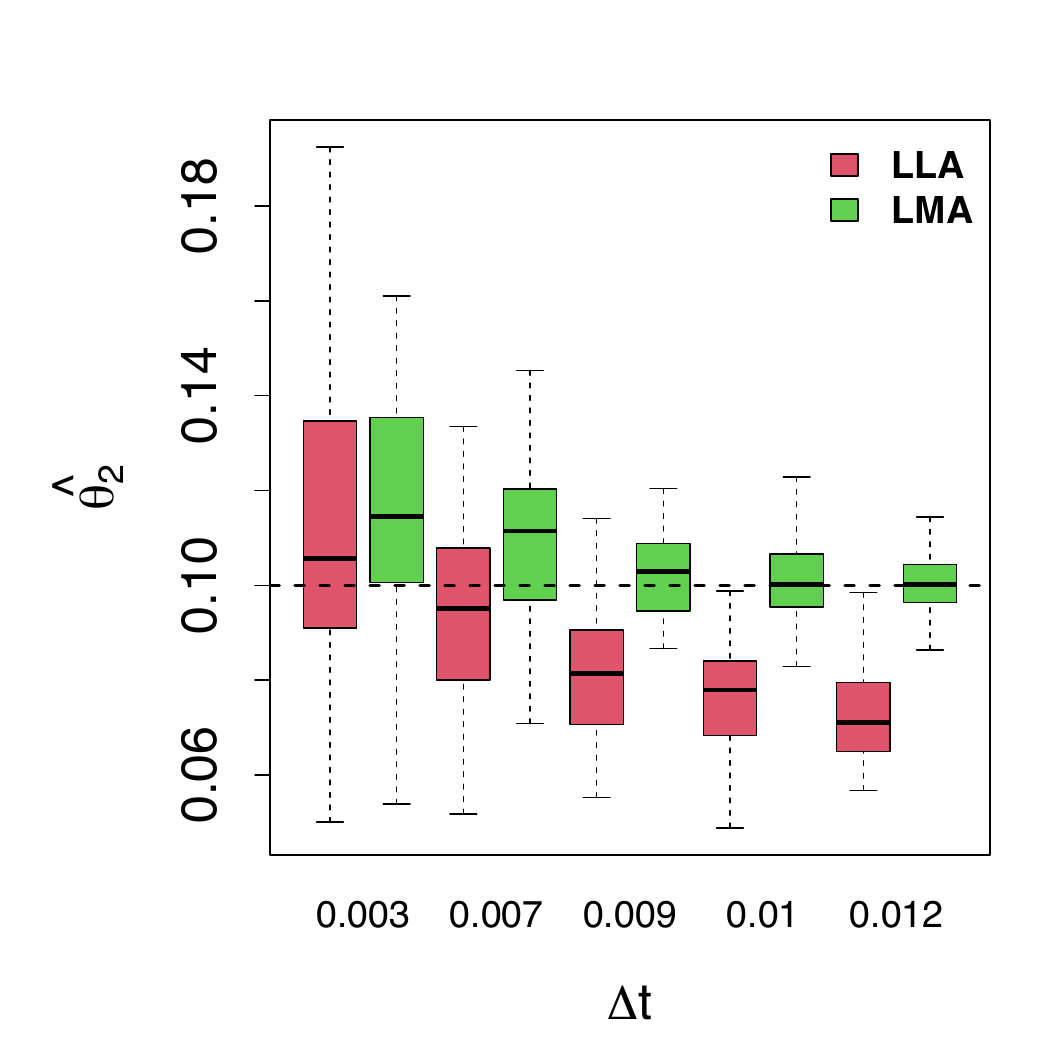}
	\includegraphics[scale=0.29]{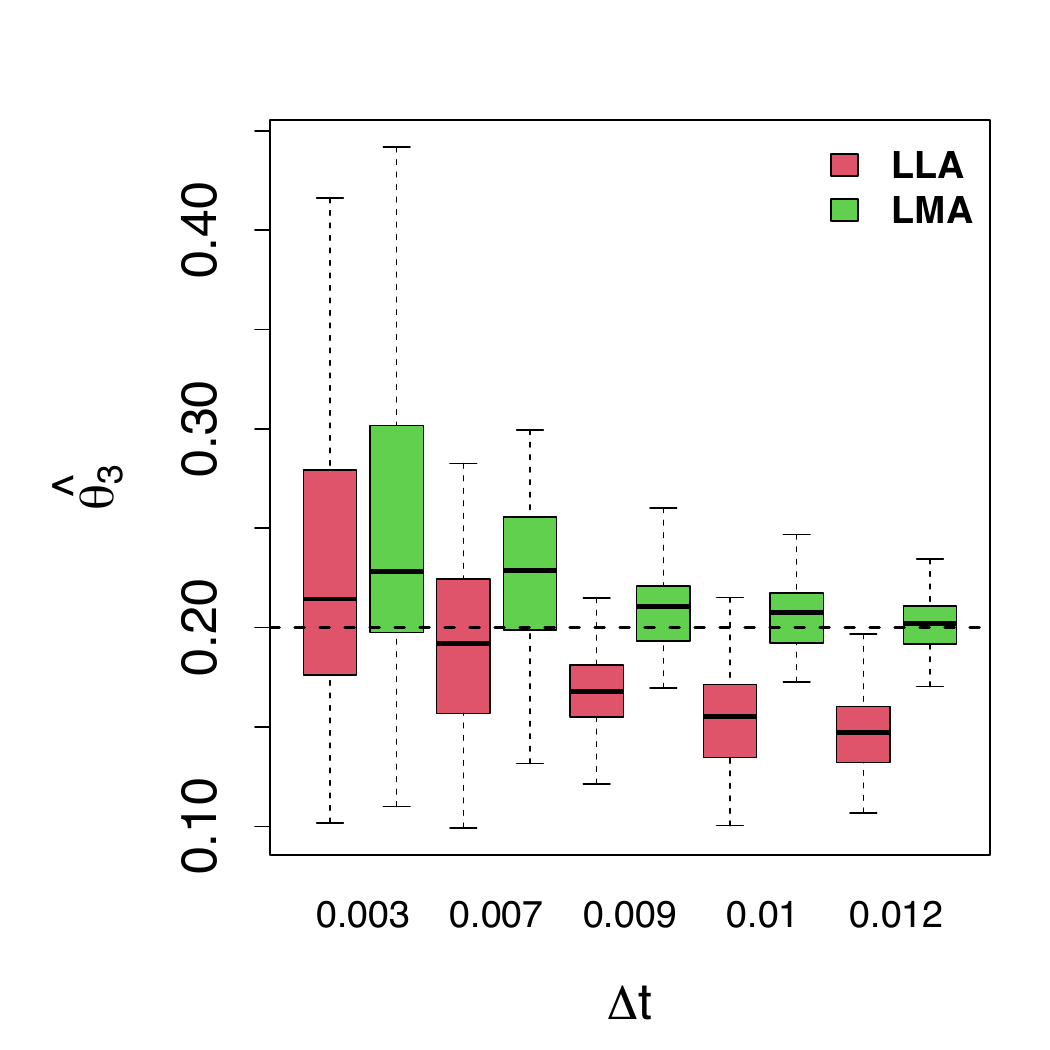}\\
	\includegraphics[scale=0.29]{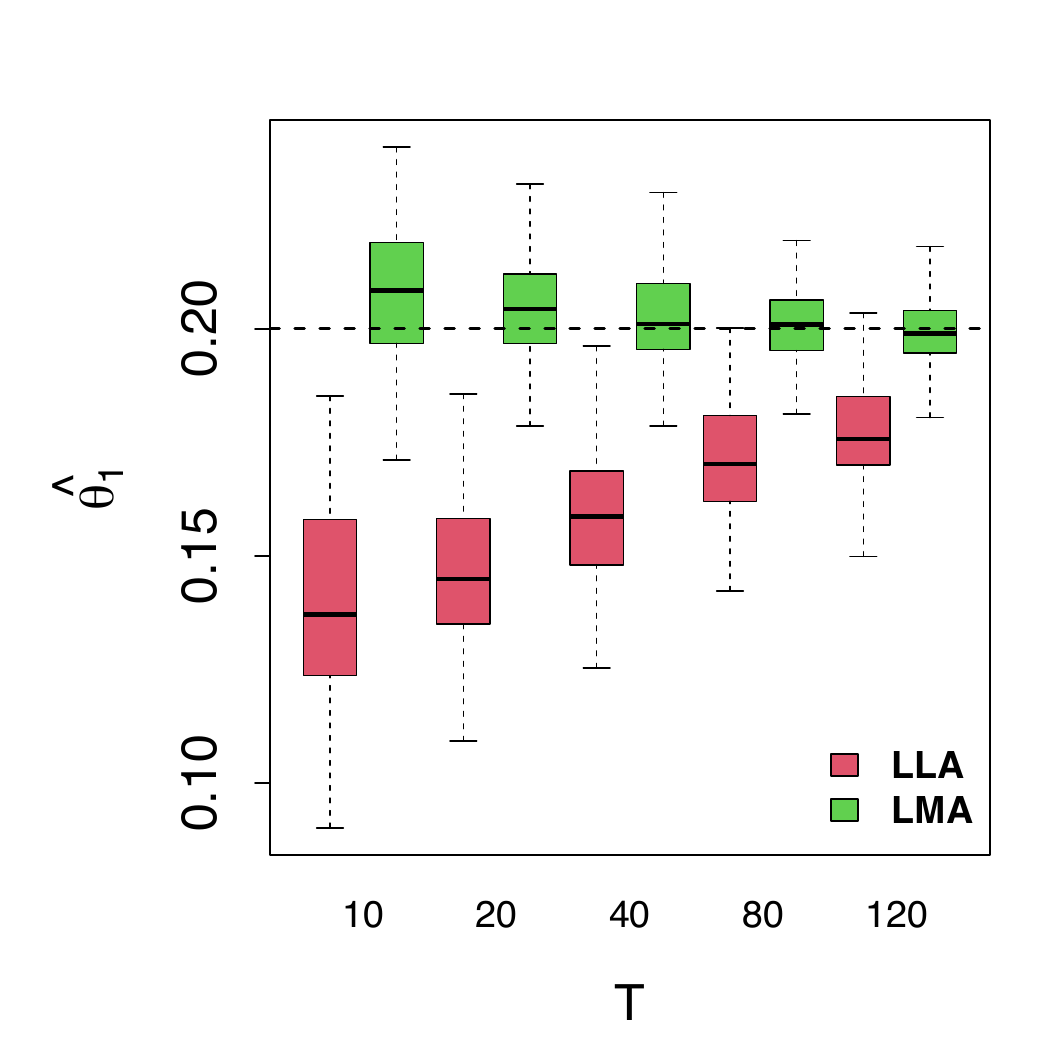}	
	\includegraphics[scale=0.29]{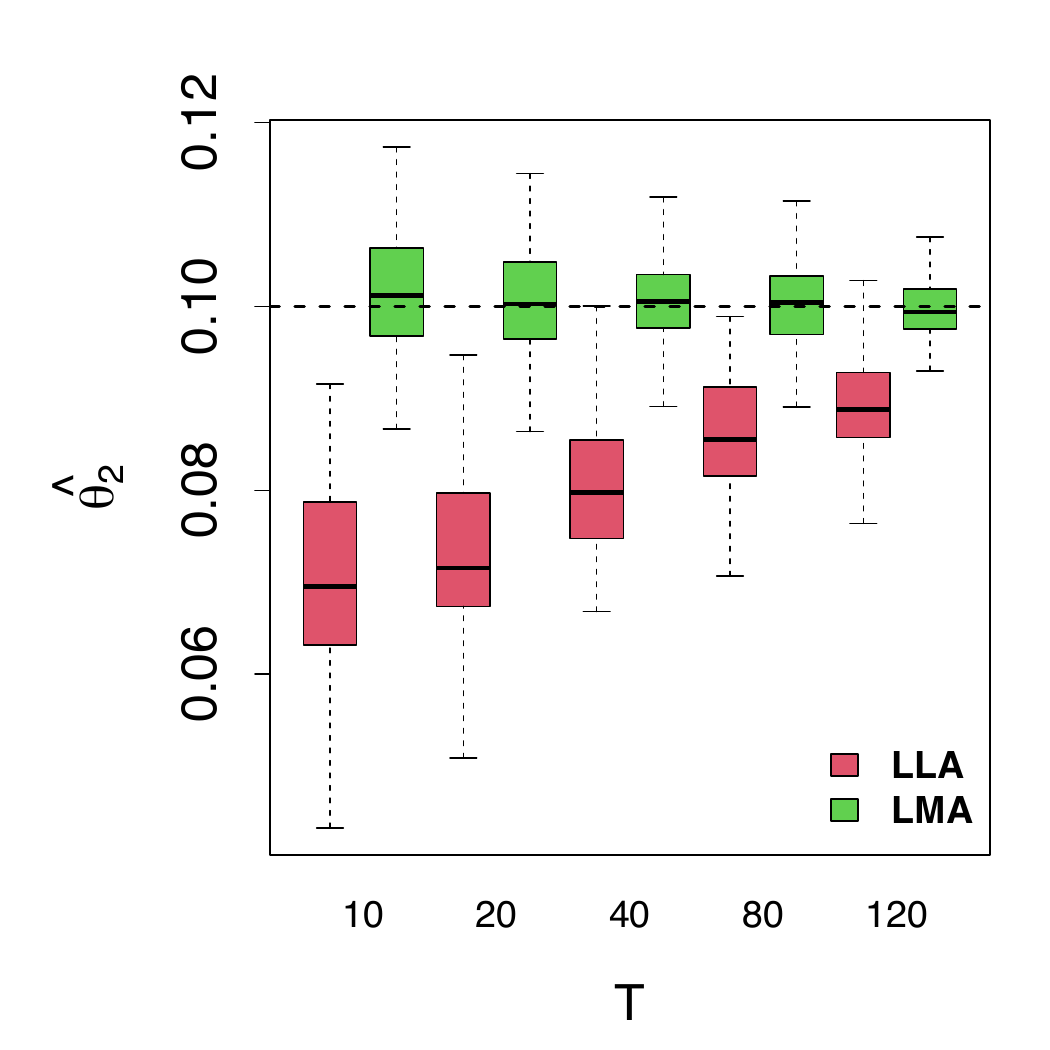}
	\includegraphics[scale=0.29]{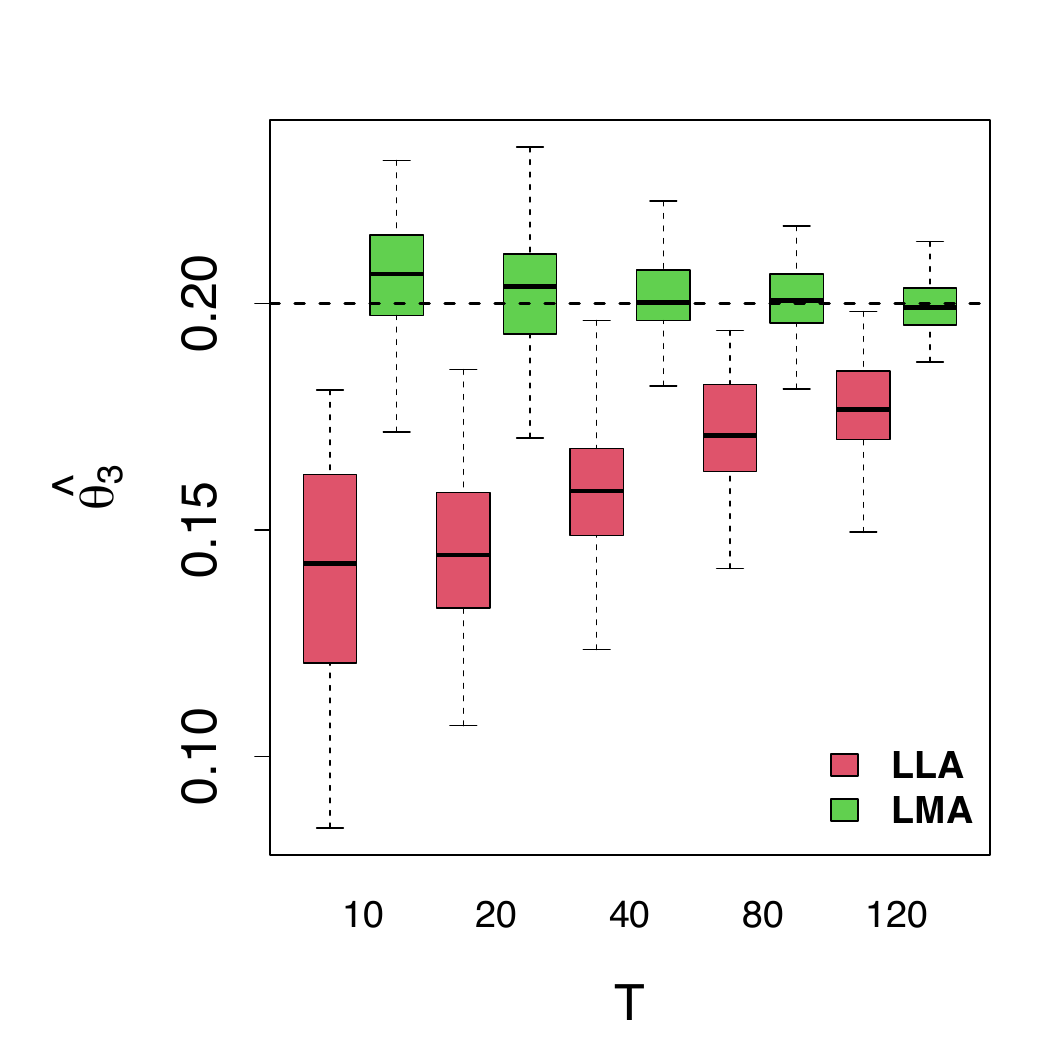}
	\caption{Distributions of the estimated rates from the local linear approximation (LLA, red) and local mean-field approximation (LMA, green) methods across $100$ simulations. Top: Keeping $T=20$ fixed, as the time step $\Delta t$ increases, the LLA estimates, unlike LMA, show increased bias. Bottom: Keeping $\Delta t =0.012$ fixed, increasing the number of time points $T$ shrinks the standard error of the LMA method in a square-root fashion. }
	\label{fig:4}     
\end{figure}
Figure~\ref{fig:4} shows the estimated parameters $\bm{\theta}= (\theta_1, \theta_2,\theta_2)$ for both the proposed LMA method (green) and the alternative LLA method (red). The results show that the performance of the two methods is comparable for small time intervals. However, as $\Delta t$ increases, the LLA approach shows an increasing bias, which is not observed for the LMA method. Furthermore, the standard error of the LMA method shrinks in a roughly $1/\sqrt{T}$ fashion, as the number of time points increases.

\subsection{Accuracy and computational time varying $r$ and $p$}\label{sec:time}
In this section, we examine the impact of varying the number of reactions $p$ and the number of particle types $p$ on the estimation accuracy and computational time required to estimate the parameters $\boldsymbol{\hat{\theta}}$. This analysis is conducted using the same experimental setup of section~\ref{sec:compLLA}, fixing ${T}=20$. The simulations are repeated $200$ times. 

The time complexity is shown with the median time across all simulations, along with the $25$th and $75$th percentiles. In terms of accuracy of parameter estimation, since the parameter values vary significantly in magnitude, we consider the Wasserstein 1-distance \citep{duy2023exact} between the distribution of an estimated parameter and the corresponding true parameter. In particular, this is given by 
\begin{equation*}
	W_1(\boldsymbol{\theta}, \hat{\boldsymbol{\theta}}) = \sum_{j=1}^r \left| F_{\theta_j} - F_{\hat\theta_j}\right|,
\end{equation*}
where  $F_{\hat\theta_j}$ is the empirical cumulative distribution of $\hat\theta_j$ from the $200$ simulations, while $F_{\theta_j}$ is the degenerate distribution on the true value $\theta_j$.  The median and interquantiles of this measure are calculated by considering $1000$ bootstrap versions of the $200$ datasets and calculating the Wasserstein 1-distance from each of these.

In a first simulation, we vary the number of particle types $p\in\{3,6,9\}$. For $p=3$, we consider the cyclic reaction system in section~\ref{example}. In order to still have a $P_{\boldsymbol{\theta}}$ invertible, we generate the settings with a higher $p$ by simply repeating the same system a number of times, as shown in Figure~\ref{tabp} of Appendix~\ref{tabels}. The results of the simulations are shown in Figure~\ref{fig:results}.  The top-left panel shows a super-linear dependence of computational time on $p$. The bottom-left panel shows that the accuracy of the estimates decreases linearly with respect to the number of $p$ states.
\begin{figure}[tb!]
	\centering
	\includegraphics[scale=0.3]{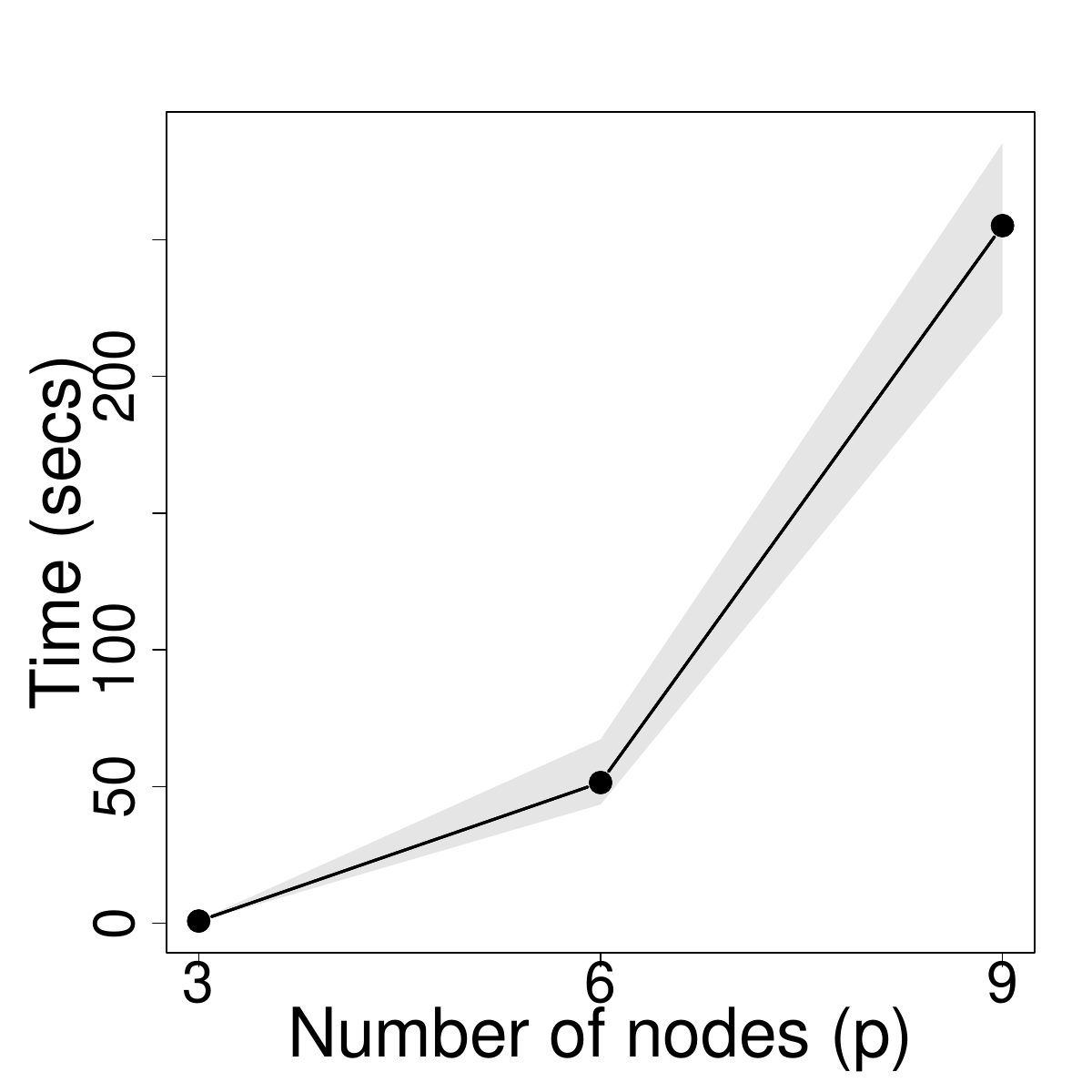} 
	\includegraphics[scale=0.3]{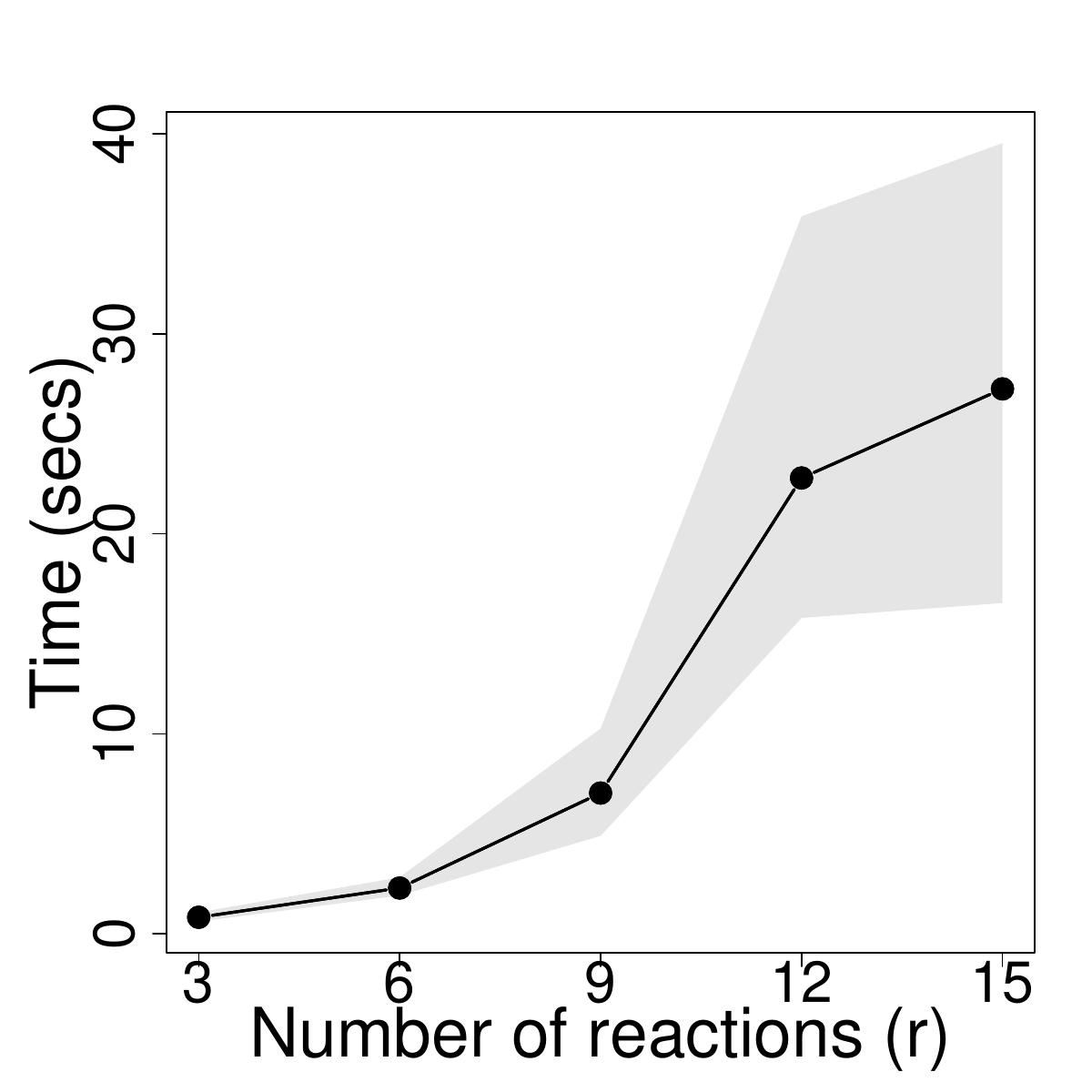}
	\includegraphics[scale=0.3]{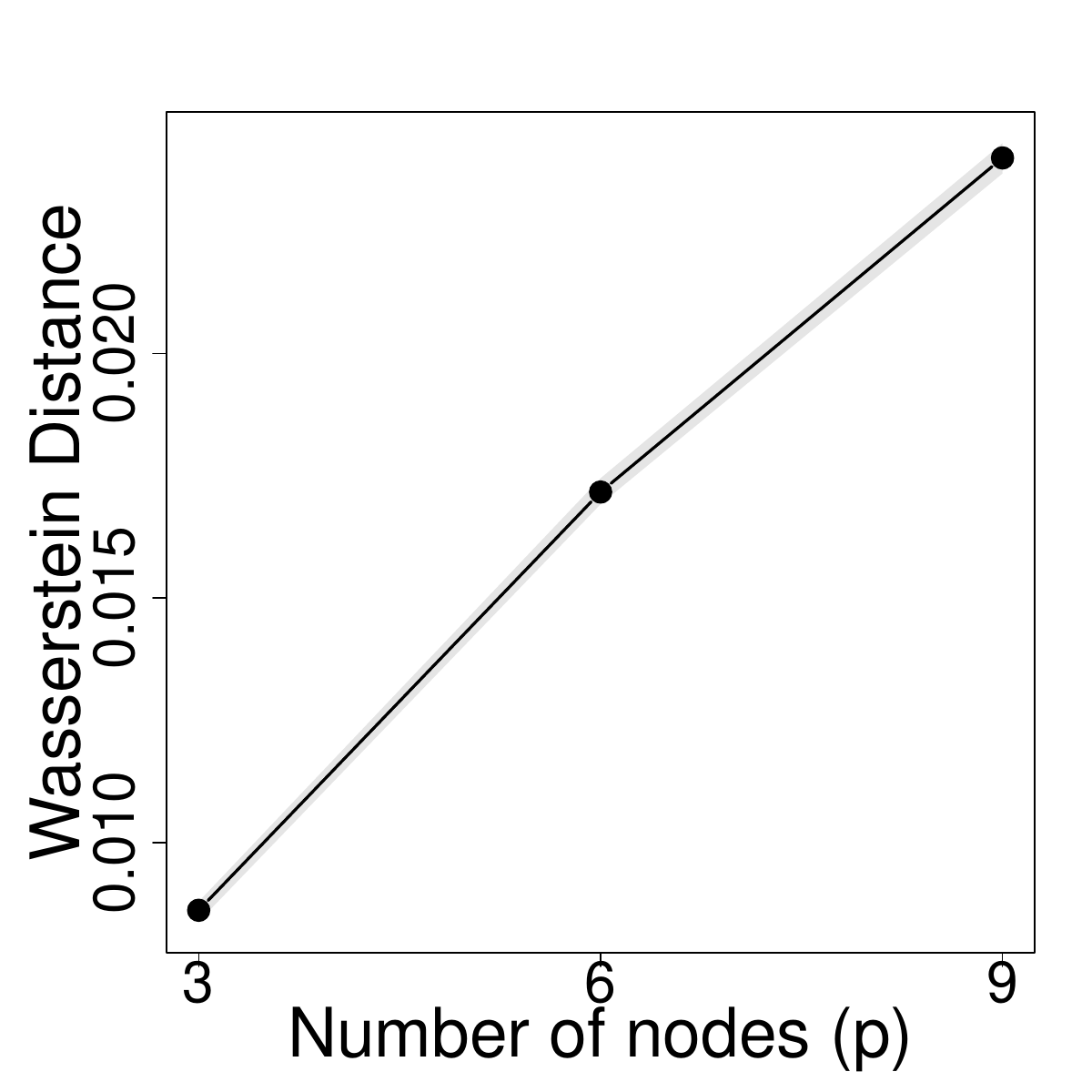}
	\includegraphics[scale=0.3]{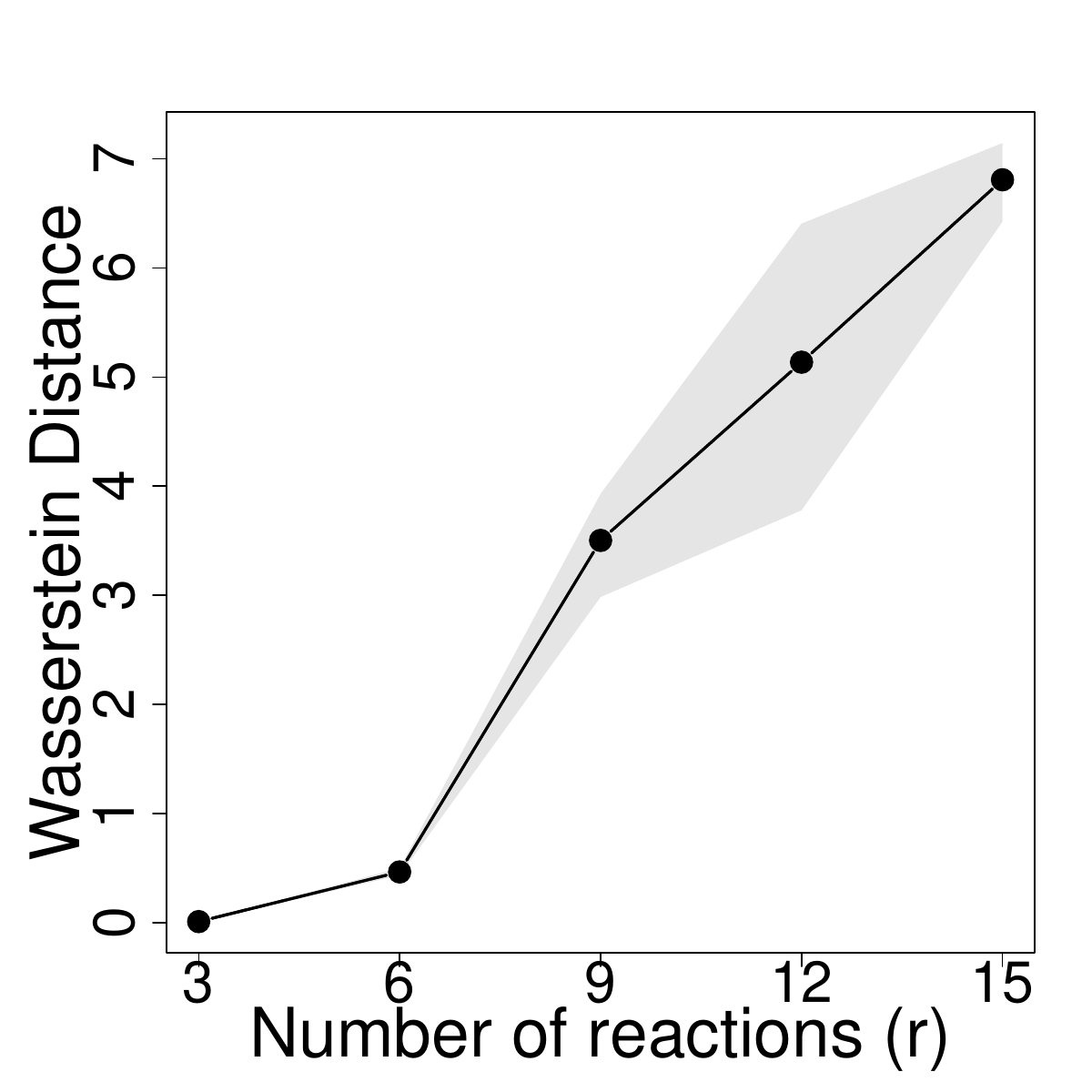}
\caption{Median computational time (top) and Wasserstein 1-distance (bottom) of the LMA algorithm, as a function of the number of reactions ($r$, right) and states ($p$, left). The shaded area represents the interquartile range across 200 simulations.}
	\label{fig:results}
\end{figure}

In a second simulation, we fix the number of particles at $p=3$ while progressively varying the number of reactions $r \in \{3, 6, 9, 12, 15\}$. We generate the different systems with varying $r$ by following the scheme in Figure~\ref{tabr} of Appendix~\ref{tabels}. The top-right panel of Figure~\ref{fig:results} shows a quasi-cubic dependence of computational time on $r$, which aligns with the theoretical predictions discussed in section~\ref{stiffness}. The bottom right plot shows how the overall accuracy of the estimates improves approximately linearly with the increase in the number of reactions.

\subsection{Comparison with M-estimator by  \cite{xu2019statistical}}\label{sec:xu}

Finally, we present a simulation study comparing our proposed method with an alternative method proposed by \cite{xu2019statistical}. The authors considered the empirical correlation between the dynamics of the particle types with the theoretical one defined by the chemical master equation. They applied their method to a human cell differentiation system involving hematopoietic stem cells (HSCs), two progenitors, and five mature cell types. The system is described in Figure~\ref{fig:Xua}. To obtain an analytical solution for the evolution of the moments, the authors assumed linear propensity functions with respect to the state concentrations. This concept is strongly restrictive in the description of cell dynamics as it implies only exponential growth and extinction of the cells, as also noted by \cite{pellin2023tracking}. However, in order to properly compare the two methods, we have kept the same assumption. Note that no changes to the method are necessary as it is a generalization of the case of linear hazard functions.  

As HSCs and progenitor cells are latent in the reference method, we only identify the dataset with mature cells after generating the complete data, in order to maintain the same original setting. In terms of reaction rates, we multiply the original values defined in \cite{xu2019statistical} by 1000. This merely constitutes a time unit change. We set the HSC duplication parameter to $\lambda=2850$, and the differentiation in progenitors a-type to $\nu_a =1400$, and b-type  to $\nu_b =700$, respectively. Such cells have death rates of $\mu_a=50$ and $\mu_b=40$. The type-a progenitor can differentiate into Granulocytes with rate $\nu_1=3600$ and into Monocytes with rate $\nu_2=1800$. The type-b progenitor differentiates into T-cells with rate $\nu_3=1000$, into B-cells with rate $\nu_4=2000$ and finally into Natural-Killer (NK) cells with rate $\nu_5=1200$. The death rates of the 5 mature cells are given by $\mu_1=26,\mu_2=13,\mu_3=11,\mu_4=16,\mu_5=9$, respectively. 

Each simulation starts from one HSC and no other cell types.  From this, $n=100$ replicates or clones, are simulated in $T=5$ time steps each. The observation times are stochastically defined, but in order to maintain the biological significance of blood sampling at defined times, the indexed mean time is defined for all clones, resulting in a data span $t=(0.00,0.08,0.11,0.14,0.17)$. We apply our proposed LMA approach and the method of \cite{xu2019statistical} to these simulated data. As in \cite{xu2019statistical}, the death rates are kept fixed, as these values are taken from the biology and immunology literature, while all other parameters are estimated from data. Given that the  \cite{xu2019statistical} algorithm relies on an initial value, we perform a sensitivity checking by restarting the inference procedure $100$ times for each simulation. The solution corresponding to the minimum cost function among the others is then selected as the final value. 

The results of the simulation study, the steady-state distribution process and the parameter distributions with $300$ simulations are shown in Figure~\ref{fig:Xu}. 
\begin{figure}[t!]
	\centering
	\subfloat[]{
		\includegraphics[scale=0.15]{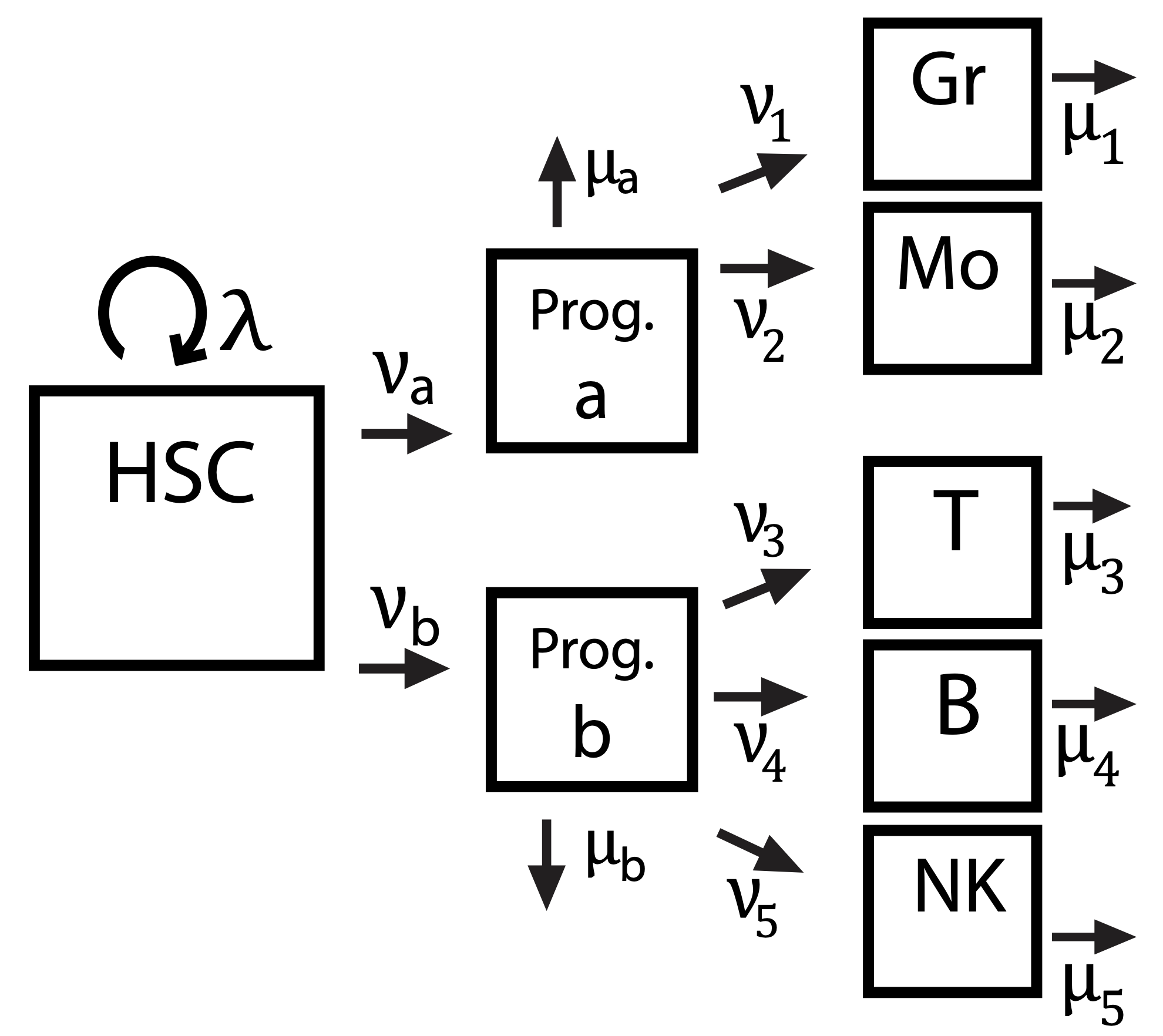}
		\label{fig:Xua}
	}
	\hspace{1.5cm}
	\subfloat[]{
		\includegraphics[scale=0.15]{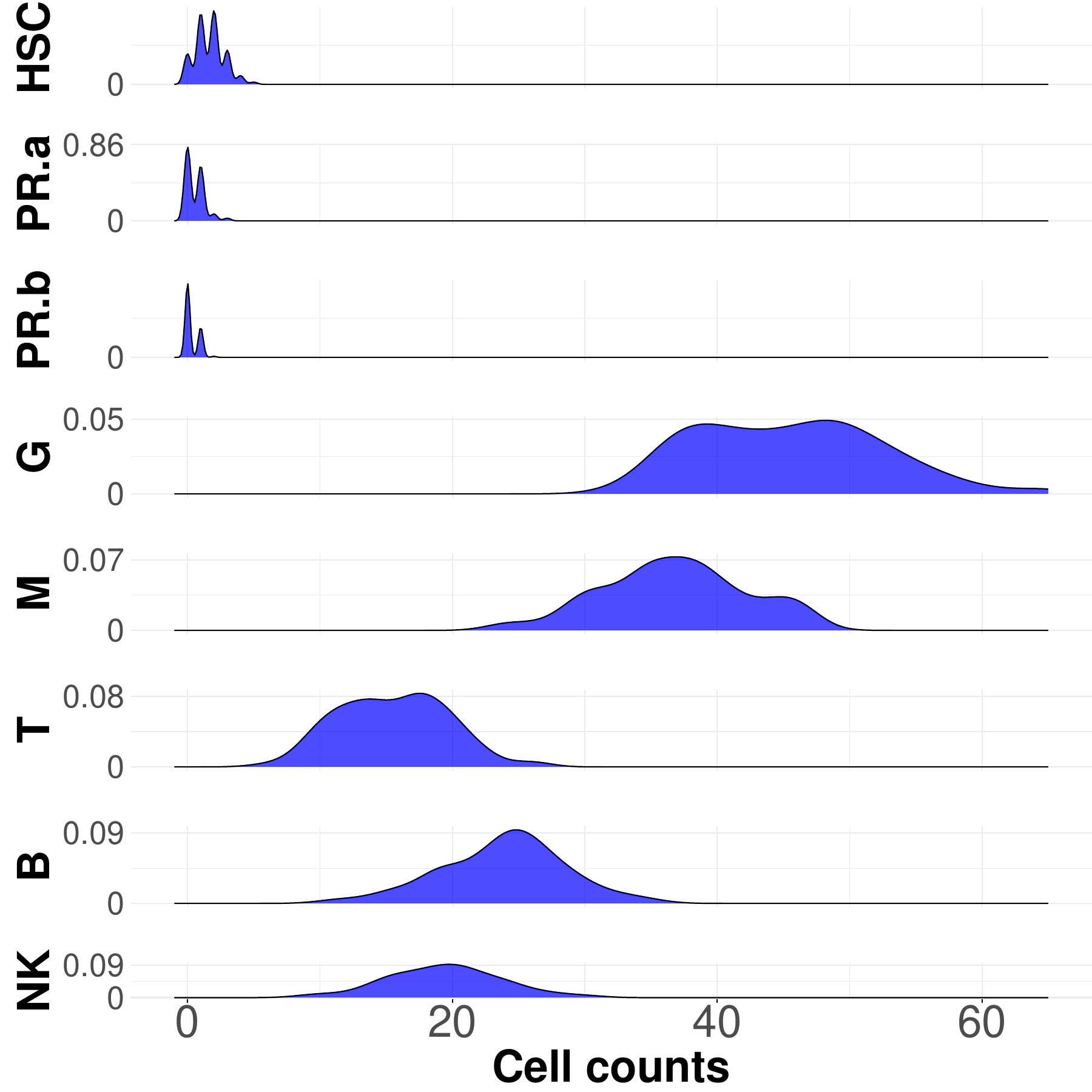}
	}\\  
	\subfloat[]{
		\includegraphics[scale=0.45]{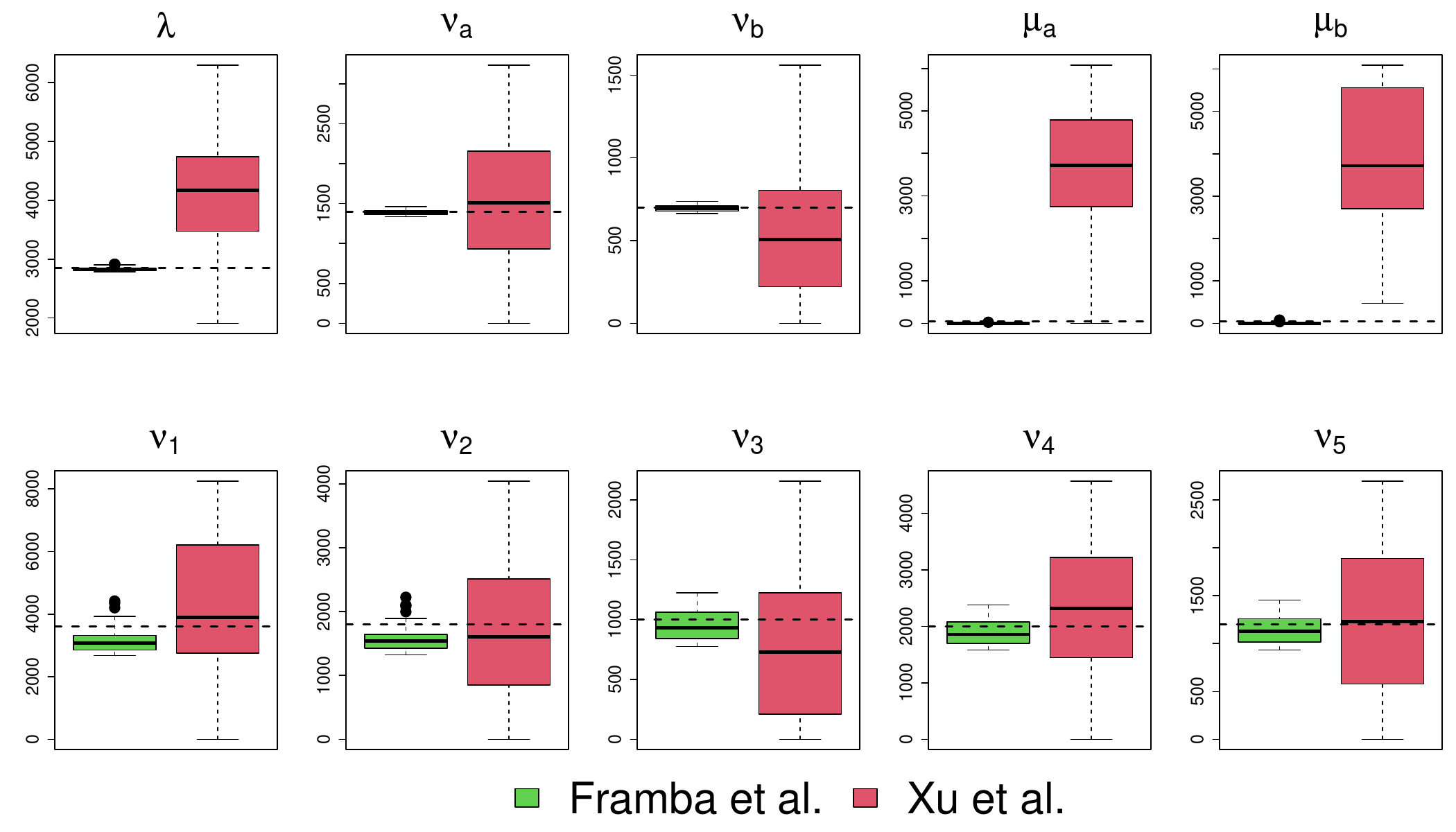}
	}
	\caption{Comparison with \cite{xu2019statistical} correlation-based M-estimator: a) Blood cell differentiation scheme. b) Cell types multi-modal steady-state distribution calculated using 100 clones. c) Boxplots of the estimated parameter distributions from 100 simulations. The 10 unknown rates are shown, the true values are indicated by a horizontal dashed red line. The proposed method is unbiased and more accurate than the M-estimator by \cite{xu2019statistical}.}
	\label{fig:Xu}
\end{figure}
The proposed LMA approach outperforms  \cite{xu2019statistical} in several aspects. Firstly, most of the LMA parameter estimates, with the possible exception of $\hat\nu_1$ and $\hat\nu_2$, are unbiased, in contrast to the estimates obtained with the  \cite{xu2019statistical} method. Secondly, the precision of the LMA estimates is significantly higher than that of  the \cite{xu2019statistical} estimates. The reason for these improvements is that the method in \cite{xu2019statistical} is based on matching second order moments, which are inherently less stable than first order methods.

\section{Cell lineage barcoding in rhesus macaques} \label{sec:rhesus}
Clonal tracing modeling is commonly used in genetic studies to enhance our understanding of blood cell formation, also known as hematopoiesis. The hematopoietic process is represented as a tree structure with a self-renewing hematopoietic stem cell at its origin. This cell evolves from a pluripotent state into mature blood cells through several intermediate progenitor stages. The regulation of the number of circulating cells in the blood is maintained through the rates of birth, death, and differentiation. An important challenge in studying this system in living organisms is that only mature cells are observable and can be accessed through blood sampling. To gain further insights, recent research has analysed cell production in non-human primates, which closely mirrors human physiology due to the similar lifespan and frequencies of hematopoietic stem progenitor cells (HSPCs) \citep{kim2000many}. In this article, we considered an in-vivo clonal tracking dataset on Rhesus Macaques \citep{wu2014clonal} and aim to recover the rates of birth, death, and differentiation from these data. 

\paragraph{Hematopoietic stem cell gene therapy.}
In the \citep{wu2014clonal}  gene therapy clonal study, unique DNA barcodes IDs were first introduced into autologous CD34+ HSPCs  utilising a high-diversity lentiviral barcode library. Then, such cells were reinfused into the three myeloableted animals \citep{shepherd2007hematopoietic}. Once reinfused, the genetically modified HSPCs home to the bone marrow, where they engraft and begin to repopulate the haematopoietic system. These barcoded HSPCs proliferate and differentiate into various blood cell lineages, allowing the tracking of individual clones over time. The integrated barcodes remain stable within the genome of the progeny cells, enabling the detailed monitoring of clonal dynamics, lineage contribution, and the longevity of specific clones across different haematopoietic compartments.   In the \cite{wu2014clonal} study, samples from peripheral blood, bone marrow, and lymph nodes for the three monkeys were collected monthly, enabling the identification of unique clonal patterns and contributions to different cell types, such Granulocytes (G),  Monocytes (M),  T, B, Natural-Killer (NK) cells. The entire observation time varies from subject to subject and corresponds to 4.5 months for monkey ZG66, 6.5 months for monkey ZH17, and 9.5 months for monkey ZH33. Each barcoded lineage $\bm{Y}(t)$  is assumed to be an independent realization of the hematopoietic stochastic model.

The data were imported from the \texttt{Karen} library  \citep{del2022stochastic}. 
The dataset contains many missing sampling events. As a pre-processing step, we exclude the time points when no barcodes were detected as well as all clones with less than 3 temporal observations. This leads to a total of $555$ unique barcodes IDs, which are split between  $434$, $50$, and $19$ different clone-types in specimen ZG66, ZH17 and ZH33, respectively.
\begin{figure}[t!]
	\centering
	\includegraphics[scale=0.35]{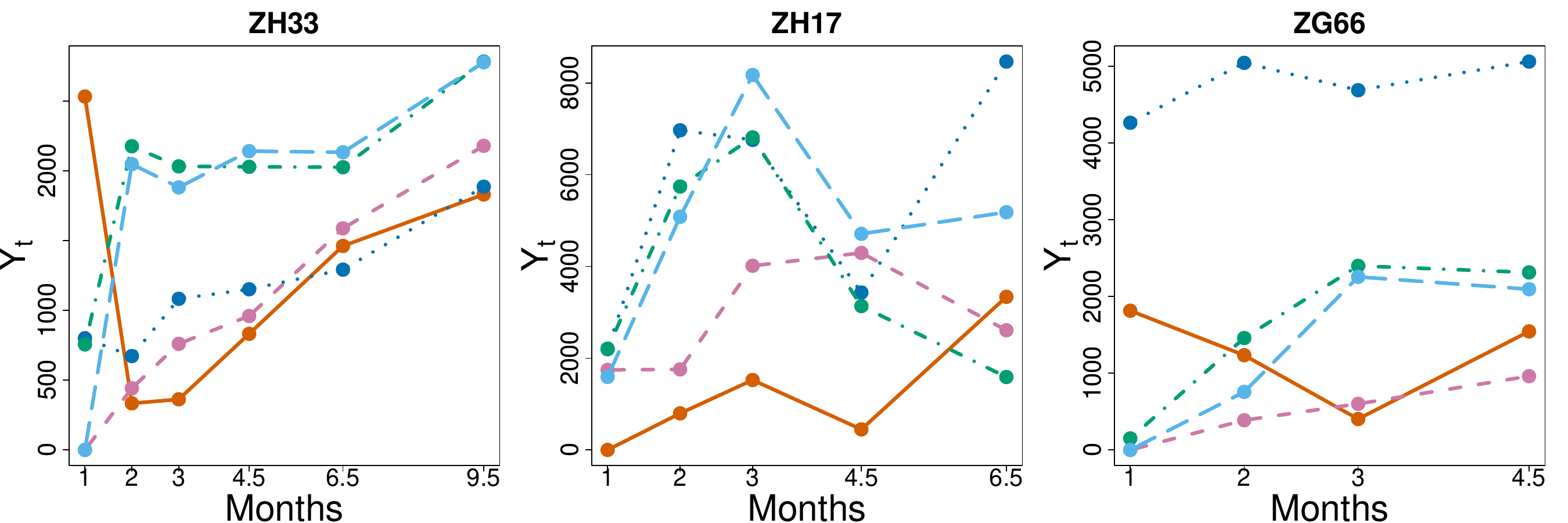}
	\centering
	\includegraphics[scale=0.34]{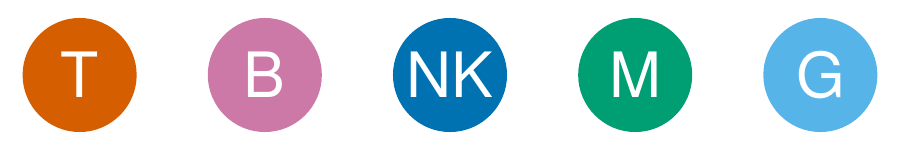}
	\caption{Mean concentration over time of each Rhesus Macaque specimen following transplantation. The $p=5$ cell types are reported with different colours and line styles.}
	\label{fig:clones}
\end{figure}
\paragraph{Hematopoietic reaction network selection.}
Figure~\ref{fig:clones} shows the average differentiation trajectories for each specimen. We posit that the dynamics of cell differentiation are universal and thus independent of individual subjects. Consequently, we assume that the differentiation process in each subject is governed by the same rate parameters. Following the same approach as in \cite{pellin2023tracking}, we define the death reaction hazard rates as a quadratic function of the underlying abundances, representing a natural saturation effect.   

Using the proposed LMA approach, we compare competing models of hematopoiesis from the available experimental data. To this end, we define the net matrix of the full model  consisting of 30 reactions, of which 10 are birth and death reactions and 20 are all possible differentiations from one cell to another.  We initialize the reaction rates with the local linear approximation estimates. Although previous similar studies fix the death parameters according to established chemical regimes \citep{hellerstein1999directly,xu2019statistical}, we keep such rates variable. We then search through the space of possible models, going from an initial model $m_1$ consisting of only one reaction out of the 30 possible ones to the full saturated model. For each model, we estimate the parameters using the LMA approach, i.e., by solving the optimisation problem \eqref{prob}. At each step of a stepwise procedure, we iteratively add and subtract the reaction that most reduces the Bayesian Information Criterion (BIC). 
This methodology leads to a sequence of models with increased complexity. The results are shown in Figure~\ref{fig:bic} (right). For each degree of complexity,  the lowest BIC of the optimal model is shown in red, while the BIC values of the unselected models are shown in gray. 

The optimal model and complexity level, i.e., the model corresponding to the lowest overall BIC, contains 10 reactions. Table \ref{app:estimate} reports the reactions that define this model, the corresponding reaction rates estimated by the proposed LMA method, together with the standard errors calculated as described in section~\ref{sec:sd}. Due to the quadratic nature of the death rates and to particle counts in the order of $10^3$ (Figure~\ref{fig:clones}), we can see how reactions involving $M$ and $B$ tend to be the slowest (with rates in the order of $10$ days), reactions involving $G$ and $T$ occurr at a rate of $1$ day, while death rates occur at the fastest rate of $10^{-1}$ days. Standard errors are small with respect to the size of the parameters, suggesting small uncertainty on the estimates.
\begin{table}[h!]
	\centering
	\begin{tabular}{ccc} \hline
		reaction &  reaction rate estimate & Standard errors \\ \hline
		$G\to \emptyset$ & $3.054 \cdot 10^{-7}$ & $2.405 \times 10^{-9}$\\
		$NK \to \emptyset$  & $1.759 \cdot 10^{-7}$ & $3.236 \times 10^{-10}$\\
		$M \to G$ & $1.340 \cdot 10^{-2}$ & $3.593 \times 10^{-4}$ \\
		$B \to M$  & $1.183 \cdot 10^{-2}$ & $4.916 \times 10^{-4}$\\
		$M \to B$  & $1.180 \cdot 10^{-2}$ & $3.69 \times 10^{-4}$\\
		$G \to NK$  & $2.339 \cdot 10^{-3}$ & $1.935 \times 10^{-4}$\\
		$G \to M$  & $5.876 \cdot 10^{-3}$ & $4.123 \times 10^{-4}$\\
		$B \to G$ & $7.048 \cdot 10^{-3}$ & $4.690 \times 10^{-4}$\\
		$G \to T$ & $6.413 \cdot 10^{-3}$ & $4.846 \times 10^{-4}$\\
		$T\to B$  & $9.469 \cdot 10^{-3}$ & $1.950 \times 10^{-4}$\\ \hline
	\end{tabular}
	\caption{Estimated rates, and corresponding standard errors, from Rhesus Macaques gene therapy trial data, based on the optimal model of cell differentiation. }
	\label{app:estimate}
\end{table}

In the search of all models, each reaction may be included in a number of models of varying complexity. As a way of obtaining a measure \(p_j\) of overall relevance of each reaction, we determine the evidence that a specific reaction is included by summing the weights of all models that include this reaction. Formally, $	p_j = \sum_{i \in \mathcal{M}_j} w_i,$
where \(\mathcal{M}_j\) is the set of all models that contain the parameter \(\theta_j\), and $w_j$ is the rescaled BIC-based weight. This is defined by
\begin{equation}
	\label{bic_weight}
	w_j = \frac{\exp\{-\frac{1}{2}(BIC_j - BIC_{\text{min}}) \}}{\sum_{h=1}^{r} \exp \{-\frac{1}{2} (BIC_h - BIC_{\text{min}})\}    },
\end{equation}
where \(BIC_{\text{min}}\) is the minimum BIC value among all the models considered. 
The cell differentiation network in Figure~\ref{fig:bic} (left) has edges whose thickness is proportional to the $p_j$ value. 
\begin{figure}[t!]
	\centering
	\includegraphics[scale=0.3]{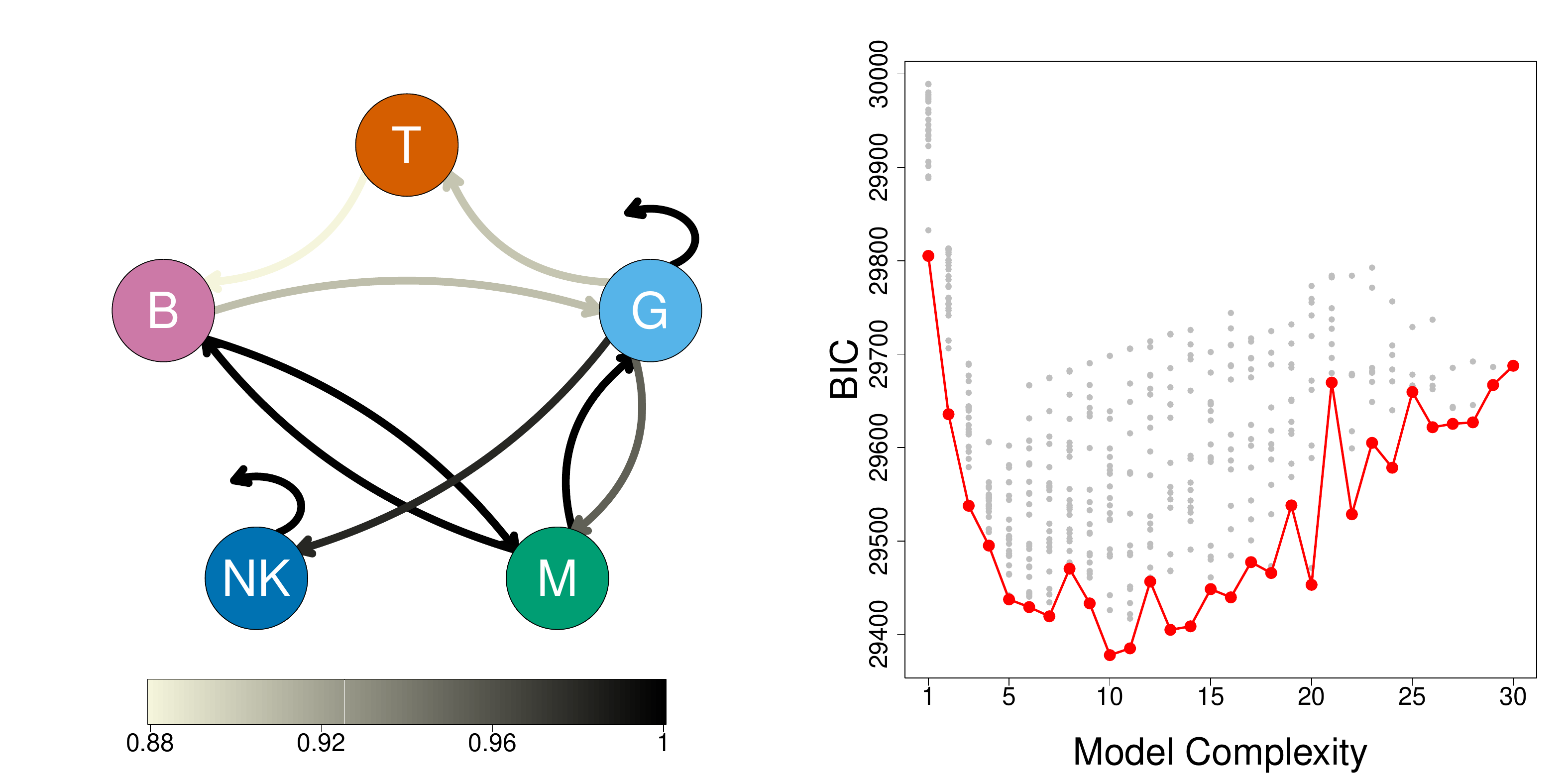}
	\centering
	\caption{BIC model selection on Rhesus Macaques data. Right: Reaction systems ordered by complexity, from the simplest (single-reaction model) to the most complex (including all possible birth, death, and single-reactant differentiation events). For each complexity level,  the BIC of the best model is shown in red, while the others are shown in gray. The optimal overall model contains $10$ reactions. Left: HSC differentiation process, where the thickness of each arrow refers to the rescaled BIC-based weights described in \eqref{bic_weight}.}
	\label{fig:bic}
\end{figure}
From this we can see, how, although each element appears both as a reactant and a product in the optimal model (Table \ref{app:estimate}), NK cells do not tend to differentiate into other cell types. Moreover, the figure shows how the less frequent edges form a loop connecting nodes B, T, and G cells, while the reaction events that appear most relevant for explaining the count trajectories involve monocytes. 

\section{Conclusion} \label{sec:6}

In this study, we have developed and assessed a novel methodology for parameter inference in quasi-reaction systems, with a particular focus on situations where observations are made at large time intervals. Traditional local linear approximation methods, while computationally efficient, often fail to capture the complex nonlinear dynamics inherent in biological systems, especially when data is sparse or irregularly spaced. Our proposed approach, which extends mean-field approximation techniques, addresses these limitations by providing an explicit solution for the first moments of the state distributions under a generic quasi-reaction system. 

The performance of the proposed local mean-field approximation method is evaluated through an extensive simulation study and compared against other methods. The results demonstrate that our method significantly outperforms local linear approximation, particularly as the time interval between observations increases. Thanks to the availability of an explicit solution, the proposed method is shown to be robust to stiffness, a common occurrence in biological systems where processes operate at vastly different time scales. In addition, we illustrate the approach for the study of cell differentiation from gene therapy clonal tracking data. The approach returns an estimate of the cell populations dynamics and provides meaningful insights into the underlying biological processes. 

This work advances the inference of quasi-reaction systems by providing a versatile and reliable tool, which is suited to any generic quasi-reaction system and which can be particularly valuable for applications in compartmental studies and multi-type branching models, where traditional methods may be inadequate or computationally prohibitive.

\bibliographystyle{chicago}
\bibliography{biblio} 

\appendix
\section{Jacobian of the hazard function}\label{jac:der}
\begin{proposition}
	Given the intensity function $
	\lambda_j(\bm{Y}(t);\boldsymbol{\theta}) = \theta_j \prod_{l=1}^p \binom{Y_l(t)}{k_{lj}},
	$ the $jl$-th element of the Jacobian matrix $\Lambda(\bm{Y}(t);\boldsymbol{\theta})\in \mathbb{R}^{r\times p}$ is given by
	\[
	\Lambda_{jl} = \theta_j \prod_{i=1}^{p} \binom{Y_i(t)}{k_{ij}}(1-\delta_{il}) \binom{Y_l(t)}{k_{lj}} \bigg(\psi(Y_l(t)+1) - \psi(Y_l(t)-k_{lj}+1)\bigg).
	\]
\end{proposition}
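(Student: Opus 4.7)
The plan is to compute $\partial \lambda_j/\partial Y_l$ directly, treating each binomial coefficient as a continuous function of its upper argument by means of the gamma representation
$$\binom{Y_l(t)}{k_{lj}} = \frac{\Gamma(Y_l(t)+1)}{\Gamma(k_{lj}+1)\,\Gamma(Y_l(t)-k_{lj}+1)}.$$
Because $\lambda_j$ is a product over $l$, a logarithmic-derivative trick is the cleanest route: only the factor indexed by $l$ contributes to $\partial/\partial Y_l$, and every other factor passes through as a multiplicative constant.

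First I would take logs of the gamma representation and differentiate, using the definition $\psi(x)=\frac{d}{dx}\log\Gamma(x)$, to obtain
$$\frac{\partial}{\partial Y_l}\log\binom{Y_l(t)}{k_{lj}} = \psi(Y_l(t)+1) - \psi(Y_l(t)-k_{lj}+1),$$
so that
$$\frac{\partial}{\partial Y_l}\binom{Y_l(t)}{k_{lj}} = \binom{Y_l(t)}{k_{lj}}\,\bigl(\psi(Y_l(t)+1) - \psi(Y_l(t)-k_{lj}+1)\bigr).$$
Next I would apply the product rule to $\lambda_j = \theta_j\prod_{i=1}^p \binom{Y_i(t)}{k_{ij}}$. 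Since $\partial \binom{Y_i(t)}{k_{ij}}/\partial Y_l = 0$ for $i\neq l$, the only surviving term is the one where the derivative lands on the $i=l$ factor, yielding
$$\Lambda_{jl} = \theta_j\!\!\prod_{i\neq l}\binom{Y_i(t)}{k_{ij}} \cdot \binom{Y_l(t)}{k_{lj}}\bigl(\psi(Y_l(t)+1)-\psi(Y_l(t)-k_{lj}+1)\bigr),$$
which, after reassembling the product over all $i$ and using the $(1-\delta_{il})$ indicator to isolate the $i=l$ factor, matches the stated expression exactly.

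There is no serious obstacle: the argument is essentially a one-line application of the digamma identity combined with the product rule. The only care needed is conceptual, namely to justify that the binomial coefficient $\binom{Y_l}{k_{lj}}$, originally defined on the integer lattice, is being differentiated via its smooth gamma-function extension, which is the natural interpolation used throughout the Taylor expansion in \eqref{Taylorapprox}. As a consistency check, specialising to $k_{lj}=2$ and using $\psi(x+1)=\psi(x)+1/x$ reduces $\binom{y_1}{2}(\psi(y_1+1)-\psi(y_1-1))$ to $y_1-1/2$, exactly the first diagonal entry of $\Lambda$ displayed in the cyclic example of section~\ref{example}, which confirms the formula.
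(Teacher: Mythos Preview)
Your proof is correct and follows essentially the same route as the paper's own argument: both use the gamma-function representation of the binomial coefficient, differentiate its logarithm to obtain the digamma difference, and combine this with the product rule so that only the $i=l$ factor contributes. The consistency check against the cyclic example is a nice addition not present in the paper.
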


\begin{proof}
	First, we define the $jl$-th element of the Jacobian matrix $\Lambda(\bm{Y}(t);\boldsymbol{\theta})$ as
	\[
	\frac{\partial}{\partial Y_l}\lambda_j(\bm{Y}(t);\boldsymbol{\theta}) = \frac{\partial}{\partial Y_l} \theta_j \prod_{i=1}^{p} \binom{Y_i(t)}{k_{ij}}.
	\]
	By applying the chain rule, we get
	\[
	\frac{\partial}{\partial Y_l} \prod_{i=1}^{p} \binom{Y_i(t)}{k_{ij}} = \bigg(\prod_{i=1}^{p} \binom{Y_i(t)}{k_{ij}}(1-\delta_{il})\bigg) \frac{\partial}{\partial Y_l} \binom{Y_l(t)}{k_{lj}},
	\]
	with $\delta_{il}$ an indicator function.
	Next, recalling the digamma function $\psi(x) = \dfrac{d}{dx} \log(\Gamma(x))$, we differentiate the logarithm of the binomial coefficient:
	\begin{align*}
		\frac{\partial}{\partial Y_l} \log \binom{Y_l(t)}{k_{lj}} &= \frac{\partial}{\partial Y_l} \left[ \log \Gamma(Y_l(t)+1) - \log \Gamma(k_{lj}+1) - \log \Gamma(Y_l(t)-k_{lj}+1) \right] \\
		&= \psi(Y_l(t)+1) - \psi(Y_l(t)-k_{lj}+1).
	\end{align*}
	Applying logarithmic differentiation, we have:
	\[
	\frac{\partial}{\partial Y_l} \binom{Y_l(t)}{k_{lj}} = \binom{Y_l(t)}{k_{lj}} \bigg(\psi(Y_l(t)+1) - \psi(Y_l(t)-k_{lj}+1)\bigg).
	\]
	Therefore, the $jl$-th element of the Jacobian matrix $\Lambda(\mathbf{Y}(t);\boldsymbol{\theta})$ is:
	\[
	\Lambda_{jl} = \theta_j \prod_{i=1}^{p} \binom{Y_i(t)}{k_{ij}}(1-\delta_{il}) \binom{Y_l(t)}{k_{lj}} \bigg(\psi(Y_l(t)+1) - \psi(Y_l(t)-k_{lj}+1)\bigg).
	\]
\end{proof}

\section{LMA algorithm}\label{SM:algorithm}
Algorithm \ref{algorithm} gives the pseudo-code of the proposed nonlinear local mean-field (LMA) algorithm for parameter estimation in a quasi-reaction model.
\begin{algorithm}
	\caption{Nonlinear local mean-field (LMA) algorithm}
	\begin{algorithmic}
		\Data $\bm{Y}$, $K$, $V$ 
		\Initialization  $\boldsymbol{\hat{\theta}}_{0}$, $toll$, $maxIter$, $k = 0$, $H_0 =\mathbb{I}_r$
		\State Define $f(\boldsymbol{\theta}) = \sum_i^{T_c}\sum_c^{C}  ( \bm{Y}_{ci} - \bm{m}_{ci})^T ( \bm{Y}_{ci} - \bm{m}_{ci})$
		\While{$\|\nabla f(\boldsymbol{\theta}_k)\| > \epsilon$ \textbf{and} $k < maxIter$}
		\State $\boldsymbol{d}_k = -H_k \nabla f(\boldsymbol{\theta}_k)$ 
		\State $\alpha_k=\arg\min_\alpha f(\boldsymbol{\theta}_k + \alpha \boldsymbol{d}_k)$
		\State $\boldsymbol{\theta}_{k+1} = \boldsymbol{\theta}_k + \alpha_k \boldsymbol{d}_k$ \Comment update $\bm{m}$ according to \eqref{ODEsol}
		\State $\boldsymbol{s}_k = \boldsymbol{\theta}_{k+1} - \boldsymbol{\theta}_k$
		\State $\boldsymbol{g}_k = \nabla f(\boldsymbol{\theta}_{k+1}) - \nabla f(\boldsymbol{\theta}_k)$
	\State $H_{k+1} = \left(\mathbb{I}_r - \frac{\boldsymbol{s}_k \boldsymbol{g}_k^T}{\boldsymbol{g}_k^T \boldsymbol{s}_k}\right) H_k \left(\mathbb{I}_r - \frac{\boldsymbol{g}_k \boldsymbol{s}_k^T}{\boldsymbol{g}_k^T \boldsymbol{s}_k}\right) + \frac{\boldsymbol{s}_k \boldsymbol{s}_k^T}{\boldsymbol{g}_k^T \boldsymbol{s}_k}$
		\State $k = k + 1$
		\EndWhile
		\State $\boldsymbol{\hat{\theta}}_{LMA}= \boldsymbol{\theta}_k$
	\end{algorithmic}
	\label{algorithm}
\end{algorithm}

\section{Local Linear Approximation}\label{app:LLA}

In this section, we describe the Local Linear Approximation (LLA) approach for estimation of the rates \( \boldsymbol{\theta} \). The LLA estimates are used in the comparative study in section~\ref{sec:compLLA} and as initial values for Algorithm \ref{algorithm} that iteratively solves the optimization problem \eqref{prob}. 

The local linear approximation approach applies Euler's method to approximate the moments of the process at time $t$ conditional on the history of the process up to time $t$. Using the expression of the conditional moments derived from the chemical master equation \eqref{CME}, and assuming constant hazard rates between consecutive time points, the conditional moments are approximated as follows:
\begin{align}
	\mathbb{E}[\bm{Y}_{ci} \mid \bm{Y}_{c,i-1}] &\simeq \bm{Y}_{c,i-1} + \sum_{j=1}^{r} v_{lj} \lambda_j(\bm{Y}_{c,i-1}; \bm{\theta}) \Delta t_i \label{moments} \\
	\mathbb{E}[Y_{c,i,l} Y_{c,i,k} \mid \bm{Y}_{c,i-1}] &\simeq Y_{c,i-1,l} Y_{c,i-1,k} + \sum_{j=1}^{r} v_{lj} Y_{c,i-1,k} \lambda_j(\bm{Y}_{c,i-1}; \bm{\theta}) \Delta t_i \nonumber \\
	&\quad + \sum_{j=1}^{r} v_{kj} Y_{c,i-1,l} \lambda_j(\bm{Y}_{c,i-1}; \bm{\theta}) \Delta t_i + \sum_{j=1}^{r} v_{lj} v_{kj} \lambda_j(\bm{Y}_{c,i-1}; \bm{\theta}) \Delta t_i. \label{moments2}
\end{align}

Since the hazard function \( \lambda_j(\bm{Y}_{c,i-1}; \boldsymbol{\theta}) \) is linear in \( \bm{\theta} \), the regression model \eqref{generic} can be rewritten as:
\[
\Delta \bm{Y}_{ci} = M_{ci} \boldsymbol{\theta} + \boldsymbol{\varepsilon}_{ci}, \quad \boldsymbol{\varepsilon}_{ci} \sim \mathcal{N}(\mathbf{0}, \Omega_{ci}),
\]
where \( \Delta \bm{Y}_{ci} = \bm{Y}_{ci} - \bm{Y}_{c,i-1} \) is the vector of concentration changes between consecutive time points. The matrix \( M_{ci} \boldsymbol{\theta} = V \text{diag}(\boldsymbol{\lambda}_{ci}(\bm{Y}_{c,i-1}; \boldsymbol{\theta})) \Delta t_i \) and \( \Omega_{ci} = V \text{diag}(\boldsymbol{\lambda}_{ci}(\bm{Y}_{c,i-1}; \boldsymbol{\theta})) V^T \Delta t_i \) represent the discretized drift function and the dispersion matrix of the concentration differentiation process, respectively.

The local linear estimate \( \hat{\bm{\theta}}_{\text{LLA}} \) is then obtained by solving the following constrained generalized least-squares problem:
\[
\hat{\bm{\theta}}_{\text{LLA}} = \arg \min_{\bm{\theta} \geq \mathbf{0}_r} \sum_{i=1}^{T_c} \sum_{c=1}^{n} \left( \Delta \bm{Y}_{ci} - M_{ci} \bm{\theta} \right)^T \Omega_{ci}^{-1} \left( \Delta \bm{Y}_{ci} - M_{ci} \bm{\theta} \right).
\]


\section{Standard error approximation}\label{SM:sd}
\begin{figure}[t!]
	\centering
	\includegraphics[scale=0.23]{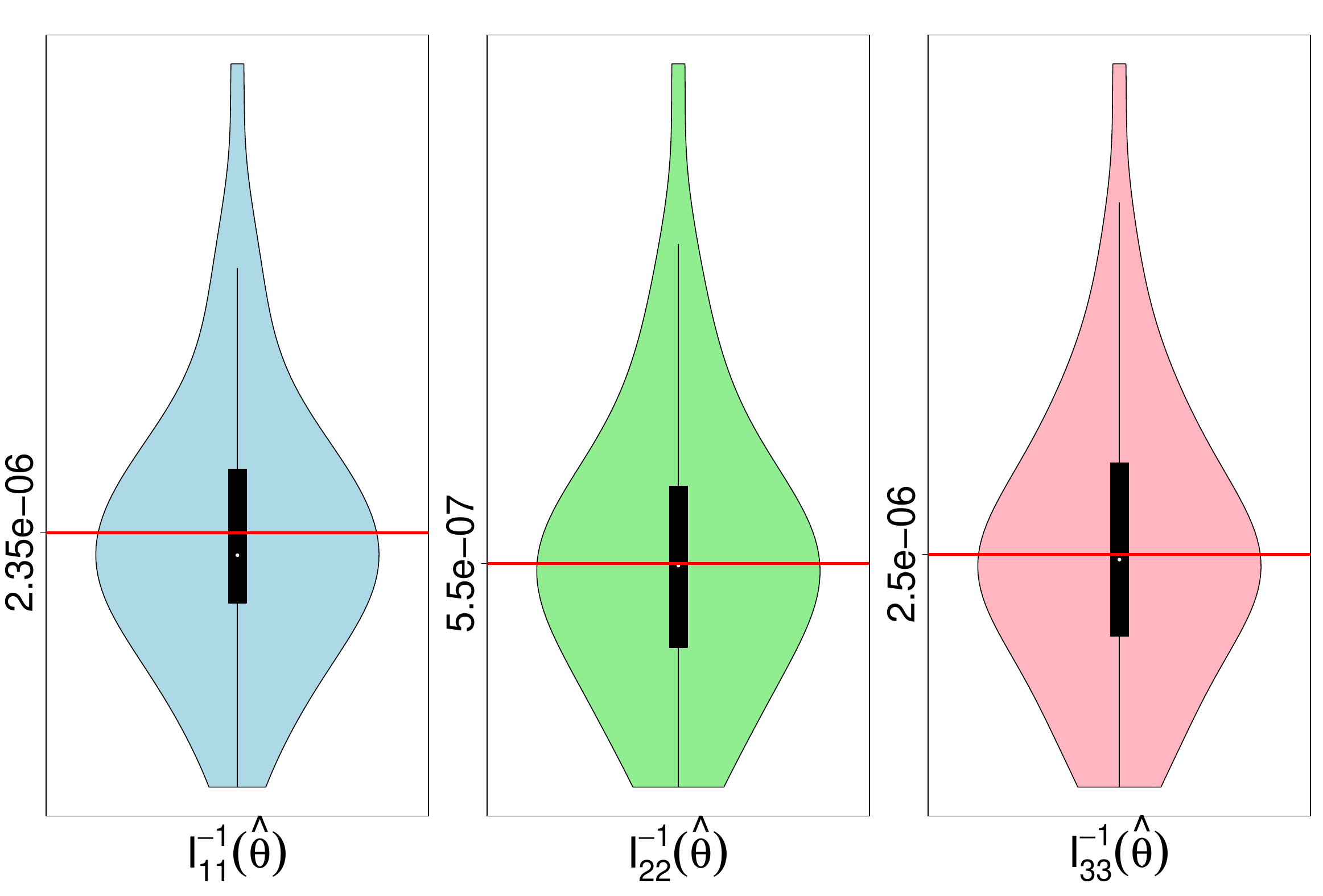}
	\centering
	\caption{Violin plots showing the distribution across 50 simulations of the diagonal elements of the inverse of the observed Fisher information matrix for the 3 reaction rates of the system in section~\ref{example}.  The horizontal line represents the empirical variance of the estimates. The figure shows a good agreement between the theoretically and empirically derived standard errors of the estimated parameters.}
	\label{fig:sd}
\end{figure}
In this section, we aim to validate the derivation of the standard error described in section~\ref{sec:sd}. As explained in that section, under a maximum-likelihood estimation framework, the variance-covariance matrix of $\hat{\boldsymbol{\theta}}$ is asymptotically approximated by the inverse of the observed Fisher information matrix,  for which we derive an explicit approximate formulation. We validate this derivation using the same model of section~\ref{example}. In particular, we generate data from this model with 100 replicates for each of the 50 simulated datasets. Figure \ref{fig:sd} presents violin plots corresponding to the diagonal elements of the inverse of the observed Fisher information matrix elements. The horizontal lines denote the observed variance of the estimates across the $50$ simulations. The figure shows a good agreement between the theoretically and empirically derived standard errors of the estimated parameters.

\section{Reaction systems used in the simulation study in section~\ref{sec:time}}\label{tabels}
This paragraph outlines the reaction systems used in section~\ref{sec:time} to evaluate the computational complexity of the algorithm.  Concerning the number of particles ($p$), for ease of notation the possible states are indicated only by the first three letters. When multiple of the three states are present, we denote these  with a subscript. 
\begin{figure}[h]
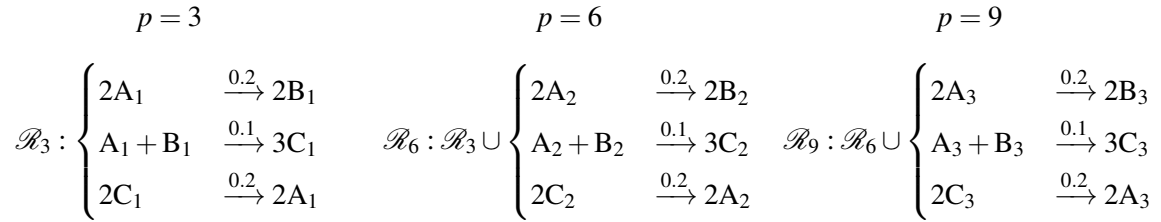

\begin{minipage}{.33\textwidth}
		\centering
		$p=3$
		\begin{align}
			\mathcal{R}_3  :
			\begin{cases}
				2 \text{A}_1 & \xrightarrow{0.2} 2\text{B}_1 \notag\\
				\text{A}_1 + \text{B}_1 & \xrightarrow{0.1} 3\text{C}_1 \notag\\
				2 \text{C}_1 & \xrightarrow{0.2} 2\text{A}_1 \notag
			\end{cases}
		\end{align}
	\end{minipage}
	\begin{minipage}{.33\textwidth}
		\centering
		$p=6$
		\begin{align}
			\mathcal{R}_{6}:	\mathcal{R}_3 \cup 
			\begin{cases}
				\label{CCRN}
				2 \text{A}_2 & \xrightarrow{0.2} 2\text{B}_2 \notag\\
				\text{A}_2 + \text{B}_2 & \xrightarrow{0.1} 3\text{C}_2 \notag\\
				2 \text{C}_2 & \xrightarrow{0.2} 2\text{A}_2 \notag
			\end{cases}
		\end{align}
	\end{minipage}	
	\begin{minipage}{.33\textwidth}
		\centering
		$p=9$
		\begin{align}
			&\mathcal{R}_{9}:	\mathcal{R}_{6} \cup 
			\begin{cases}
				2 \text{A}_3 & \xrightarrow{0.2} 2\text{B}_3 \notag\\
				\text{A}_3 + \text{B}_3 & \xrightarrow{0.1} 3\text{C}_3 \notag\\
				2 \text{C}_3 & \xrightarrow{0.2} 2\text{A}_3 \notag
			\end{cases}
		\end{align}
	\end{minipage}	
	\caption{Dynamic systems with an increasing number of particles ($p=3, 6, 9$).  Rates are reported above each reaction.}
	\label{tabp}
\end{figure} 
\begin{figure}[h]
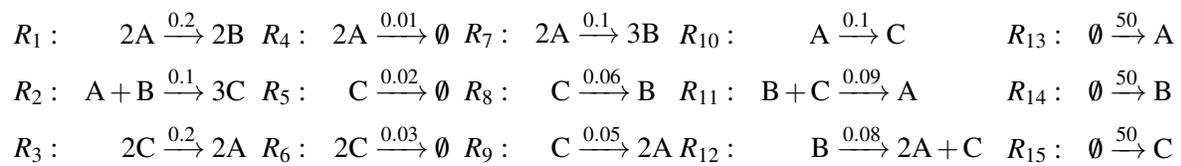

\begin{minipage}{.17\textwidth}
		\centering
		\begin{align}
			\begin{aligned}
				\begin{aligned}
					&R_1: & 2 \text{A} &\xrightarrow{0.2} 2 \text{B} \\
					&R_2: & \text{A} + \text{B} &\xrightarrow{0.1} 3 \text{C} \\
					&R_3: & 2 \text{C} &\xrightarrow{0.2} 2 \text{A}
				\end{aligned}
			\end{aligned}\notag
		\end{align}
	\end{minipage}	
\begin{minipage}{.17\textwidth}
		\centering
		\begin{align}
			\begin{aligned}
				\begin{aligned}
					&R_4: &2\text{A} &\xrightarrow{0.01}  \emptyset \\
					&R_5: &\text{C} &\xrightarrow{0.02}  \emptyset \\
					&R_6:& 2\text{C} &\xrightarrow{0.03}  \emptyset
				\end{aligned}
			\end{aligned}\notag
		\end{align}
	\end{minipage}	
	\begin{minipage}{.17\textwidth}
		\centering
		\begin{align}
			\begin{aligned}
				\begin{aligned}
					&R_7: & 2\text{A} &\xrightarrow{0.1} 3\text{B} \\
					&R_8: & \text{C}  &\xrightarrow{0.06}  \text{B} \\
					&R_9:& \text{C} &\xrightarrow{0.05} 2 \text{A}
				\end{aligned}
			\end{aligned}\notag
		\end{align}
	\end{minipage}	
	\begin{minipage}{.17\textwidth}
		\centering
		\begin{align}
			\begin{aligned}
				\begin{aligned}
					&R_{10}: &  \text{A} &\xrightarrow{0.1}  \text{C} \\
					&R_{11}: & \text{B} + \text{C} &\xrightarrow{0.09}  \text{A} \\
					&R_{12}:&  \text{B} &\xrightarrow{0.08} 2 \text{A}+\text{C}
				\end{aligned}
			\end{aligned}\notag
		\end{align}
	\end{minipage}	
	\begin{minipage}{.17\textwidth}
		\centering
		\begin{align}
			\begin{aligned}
				\begin{aligned}
					&R_{13}: & \emptyset &\xrightarrow{50}  \text{A} \\
					&R_{14}: &  \emptyset &\xrightarrow{50}  \text{B} \\
					&R_{15}:& \emptyset &\xrightarrow{50}  \text{C}
				\end{aligned}
			\end{aligned}\notag
		\end{align}
	\end{minipage}	
	\caption{Five dynamic systems are illustrated, with the first scenario including $\{R_1,R_2,R_3\}$ reactions. Each subsequent scenario adds 3 additional reactions to the previous one, culminating in a final system with 15 reactions. Reaction rates are reported above each reaction.}
	\label{tabr}
\end{figure}

\end{document}